\newcommand{\punt}[1]{}
\newcommand{\cmnt}[1]{}
\newtheorem{theorem}{Theorem}
\newtheorem{lemma}[theorem]{Lemma}
\newtheorem{property}[theorem]{Property}
\newtheorem{definition}[theorem]{Definition}
\newcommand{\secref}[1]{Section~\ref{sec:#1}}
\newcommand{\figref}[1]{Figure~\ref{fig:#1}}
\newcommand{\thmref}[1]{Theorem~\ref{thm:#1}}
\newcommand{\lemref}[1]{Lemma~\ref{lem:#1}}
\newcommand{\defref}[1]{Definition~\ref{def:#1}}
\newcommand{\propref}[1]{Property~\ref{prop:#1}}
\newcommand{\lineref}[1]{Line~\ref{lin:#1}}
\newcommand{\subsecref}[1]{SubSection~\ref{subsec:#1}}
\newcommand{\theqed}{$\Box$}
\newcommand{\qed}{\hspace*{\fill}\theqed\\\vspace*{-0.5em}}
\newcounter{linenumber}
\newcommand{\remove}[1]{}
\newcommand{\Wset}{\textit{Wset}}
\newcommand{\id}[1]{\mbox{\textit{#1}}}
\newcommand{\ignore}[1]{}
\newcommand{\op} {operation}
\newcommand{\termop} {terminal operation}
\newcommand{\comm}{\textit{committed}}
\newcommand{\aborted}{\textit{aborted}}
\newcommand{\txns}{\textit{txns}}
\newcommand{\evts}[1] {evts(#1)}
\newcommand{\ssch} {sub-history}
\newcommand{\lastw} {lastWrite}
\newcommand{\lwrite}[2] {#2.lastWrite(#1)}
\newcommand{\valid} {valid}
\newcommand{\validity} {validity}
\newcommand{\svvalid} {legal}
\newcommand{\legal} {legal}
\newcommand{\tobj} {t-object}
\newcommand{\trya}{\textit{tryA}}
\newcommand{\tseq} {t-sequential}
\newcommand{\opg}[2] {OPG(#1, #2)}
\newcommand{\ord}[1] {ord(#1)}
\newcommand{\ordfn} {ord}
\newcommand{\mv} {mv}
\newcommand{\rf} {rf}
\newcommand{\rt} {rt}
\newcommand{\mvto} {MVTO}
\newcommand{\memop} {memory operation}
\newcommand{\writei} {stm\_write}
\newcommand{\tryci} {stm\_tryC}
\newcommand{\tryc} {tryC}
\newcommand{\tcntr} {tCounter}
\newcommand{\livel} {liveList}
\newcommand{\begtrans} {begin\_tran}
\newcommand{\checkv} {check\_versions}
\newcommand{\init} {initialize}
\newcommand{\find} {find\_lts}
\newcommand{\instup} {ins\_tuple}
\newcommand{\remid} {removeId}
\newcommand{\gc} {gc}
\newcommand{\locko} {lockOrder}
\newcommand{\aco} {accessOrder}
\newcommand{\seq}[2] {linearize(#2, #1)}
\newcommand{\stl}[3] {#3.stl(#1, #2)}
\newcommand{\lts}[3] {#3.lts(#1, #2)}
\newcommand{\ltsn} {lts}
\newcommand{\vli}[3] {#3.vlist\_index(#1, #2)}
\algrenewcommand{\algorithmiccomment}[1]{$//$ #1}
\title{A TimeStamp based Multi-version STM Protocol that satisfies Opacity and Multi-Version Permissiveness \footnote{This work was in part supported by research funding provided by NetApp, Bangalore.}}
\author{Priyanka Kumar \\
priyanka@iitp.ac.in \\
Sathya Peri \\
sathya@iitp.ac.in \\
CSE Dept Indian Institute of Technology Patna\\
India \\ 
}
\begin{document}

\date{}
\maketitle
\thispagestyle{empty}

\begin{abstract}
Software Transactional Memory Systems (STM) are a promising alternative to lock based systems for concurrency control in shared memory systems. In multiversion STM systems, each write on a transaction object produces a new version of that object. The advantage obtained by storing multiple versions is that one can ensure that read operations do not fail. Opacity is a commonly used correctness criterion for STM systems. Multi-Version permissive STM system never aborts a read-only transaction. Although  many multi-version STM systems have been proposed, to the best of our knowledge none of them have been formally proved to satisfy opacity. In this paper we present a time-stamp based multiversion STM system that satisfies opacity and mv-permissiveness. We formally prove the correctness of the proposed STM system. We also present  garbage collection procedure which deletes unwanted versions of the transaction objects and formally prove it correctness.
\end{abstract}


\section{Introduction}
\label{sec:intro}
In recent years, Software Transactional Memory systems (STM) \cite{HerlMoss:1993:SigArch}, \cite{ShavTou:1995:PODC} have garnered significant interest as an elegant alternative for addressing concurrency issues in memory. STM systems take optimistic approach. Multiple transactions are allowed to execute concurrently. On completion, each transaction is validated and if any inconsistency is observed it is \emph{aborted}. Otherwise it is allowed to \emph{commit}. 

An important requirement of STM systems is to precisely identify the criterion as to when a transaction should be aborted/committed. Commonly accepted correctness criterion for STM systems is \emph{Opacity} proposed by Guerraoui, and Kapalka \cite{GuerKap:2008:PPoPP}. Opacity requires all the transactions including aborted to appear to execute sequentially in an order that agrees with the order of non-overlapping transactions. Opacity unlike traditional serializability \cite{Papad:1979:JACM} ensures that even aborted transactions read consistent values. 

With the increase in concurrency, more transactions may conflict and abort, especially in presence many long-running transactions which can have a very bad impact on performance \cite{AydAbd:2008:Serial:transact}. Perelman et al \cite{Perel+:2011:SMV:DISC} observe that read-only transactions play a significant role in various types of applications. But long read-only transactions could be aborted multiple times in many of the current STM systems \cite{herlihy+:2003:stm-dynamic:podc, dice:2006:tl2:disc}. In fact Perelman et al \cite{Perel+:2011:SMV:DISC} show that many STM systems waste 80\% their time in aborts due to read-only transaction. 

It was observed that by storing multiple versions of each object, multi-version STMs can ensure that read-only transactions do not abort. Maintaining multiple versions was first successfully used in databases. Since then, many STM systems have been developed that store multiple version of objects \cite{riegel+:sistm:transact:2006, riegel+:2006:LSA:DISC, Perel+:2011:SMV:DISC, perel+:2010:MultVer:PODC}. However storing multiple versions poses a difficulty of deleting unwanted object versions. It is necessary to regularly delete unused versions which otherwise could use a lot of memory. 

An important goal of STM system is to ensure that a transaction is not aborted when it does not violate correctness requirement. Many STM system however spuriously abort transactions \cite{attiyaHill:sinmvperm:tcs:2012} even not required. A \emph{permissive} STM \cite{Guer+:disc:2008} does not abort a transaction unless committing of it violates consistency requirements. A \emph{multi-version permissive} or \emph{mv-permissive} STM system \cite{perel+:2010:MultVer:PODC} never aborts a read-only transaction; it aborts an update transaction (i.e transaction that also writes) when it conflicts with other update transactions. 

Although many of the multi-version STM systems proposed in literature satisfy mv-permissiveness, no STM system to our knowledge has been formally of them are proved to satisfy opacity. In this paper, we propose a simple multi-version timestamp ordering STM system. We formally prove that our algorithm satisfies opacity and mv-permissiveness. To delete unwanted versions, we also give an algorithm for garbage collection and prove its correctness.

   
\vspace{1mm}
\noindent
\textit{Roadmap.} 
The paper is organized as follows. We describe our system model in \secref{model}. In \secref{gckt} we formally define the graph characterization for implementing the \emph{opacity}. In \secref{mvto}, we describe the working principle of MVTO protocol and its algorithm. In \secref{gar} we are collecting the garbage. Finally we conclude in \secref{conc}.

\section{System Model and Preliminaries}
\label{sec:model}

The notions and definitions described in this section follow the definitions of \cite{KuzSat:Corr:2012}. We assume a system of $n$ processes, $p_1,\ldots,p_n$ that access a
collection of \emph{objects} via atomic \emph{transactions}.
The processes are provided with  four \emph{transactional operations}: the
\textit{write}$(x,v)$ operation that updates object $x$ with value
$v$, the \textit{read}$(x)$ operation that returns a value read in
$x$, \textit{tryC}$()$ that tries to commit the transaction and
returns \textit{commit} ($c$ for short) or \textit{abort} ($a$ for
short), and \textit{\trya}$()$ that aborts the transaction and returns
$A$. The objects accessed by the read and write \op{s} are called as
\tobj{s}. For the sake of simplicity, we assume that the values written by all the transactions are unique. 

Operations \textit{write}, \textit{read} and \textit{\tryc}$()$ may
return $a$, in which case we say that the operations \emph{forcefully
abort}. Otherwise, we say that the operation has \emph{successfully}
executed.  Each operation is equipped with a unique transaction
identifier. A transaction $T_i$ starts with the first operation and
completes when any of its operations returns $a$ or $c$. 
Abort and commit \op{s} are called \emph{\termop{s}}. 

For a transaction $T_k$, we denote all its  read \op{s} as $Rset(T_k)$
and write \op{s} $Wset(T_k)$. Collectively, we denote all the \op{s}
of a  transaction $T_i$ as $\evts{T_k}$. 

\vspace{1mm}
\noindent
\textit{Histories.} A \emph{history} is a sequence of \emph{events}, i.e., a sequence of
invocations and responses of transactional operations. The collection
of events is denoted as $\evts{H}$. For simplicity, we only consider
\emph{sequential} histories here: the invocation of each transactional
operation is immediately followed by a matching response. Therefore,
we treat each transactional operation as one atomic event, and let
$<_H$ denote the total order on the transactional operations incurred
by $H$. With this assumption the only relevant events of a transaction
$T_k$ are of the types: $r_k(x,v)$, $r_k(x,A)$, $w_k(x, v)$, $w_k(x,
v,A)$, $\tryc_k(C)$ (or $c_k$ for short), $\tryc_k(A)$, $\trya_k(A)$ (or $a_k$ for short). 
We identify a history $H$ as tuple $\langle \evts{H},<_H \rangle$. 

\begin{figure}[tbph]
\centerline{\scalebox{0.7}{\input{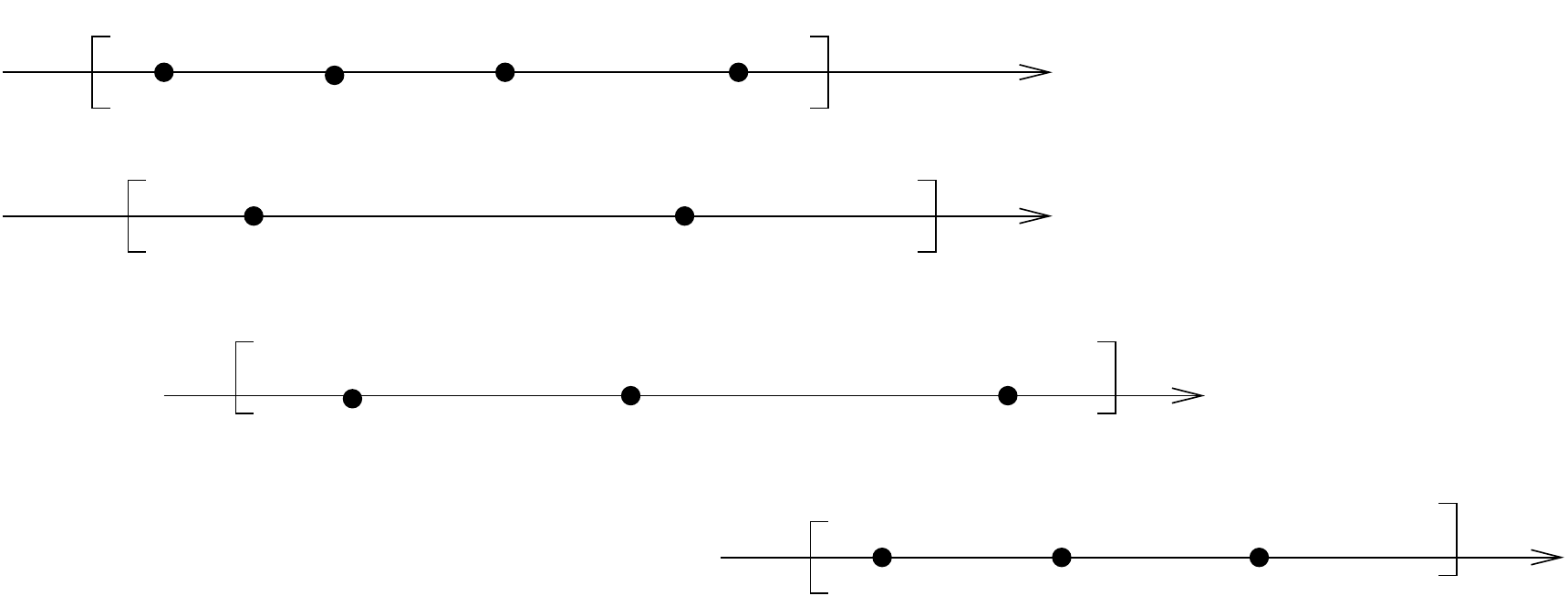_t}}}
\caption{Pictorial representation of a History $H1$}
\label{fig:ex1}
\end{figure}

Let $H|T$ denote the history consisting of events of $T$ in $H$, and $H|p_i$ denote the history consisting of events of $p_i$ in $H$. We only consider \emph{well-formed} histories here, i.e., (1) each $H|T$ consists of  a read-only prefix (consisting of read operations only), followed by a write-only part (consisting of write operations only), possibly \emph{completed} with a $\tryc$ or $\trya$ operation\footnote{This restriction brings no loss of  generality~\cite{KR:2011:OPODIS}.}, and (2) each $H|p_i$ consists of a sequence of transactions, where no new
transaction begins before the last transaction completes (commits or a aborts). 

We assume that every history has an initial committed transaction $T_0$ that initializes all the data-objects with 0. The set of transactions that appear in $H$ is denoted by $\txns(H)$. The set of committed
(resp., aborted) transactions in $H$ is denoted by $\comm(H)$ (resp., $\aborted(H)$). The set of \emph{incomplete} (or \emph{live}) transactions in $H$ is denoted by $\id{incomplete}(H)$ ($\id{incomplete}(H)=\txns(H)-\comm(H)-\aborted(H)$). For a history $H$, we construct the \emph{completion} of $H$, denoted $\overline{H}$, by inserting $a_k$ immediately after the last event of every transaction $T_k\in\id{incomplete}(H)$.

\figref{ex1} shows a pictorial representation of a history $H1:r_1(x, 0) r_2(x, 0) r_1(y, 0) r_3(z, 0) w_1(x, 5) w_3(y, 15) \\
w_2(y, 10) w_1(z, 10) c_1 c_2 r_4(x, 5) r_4(y, 10) w_3(z, 15) c_3 r_4(z, 10)$. 

\vspace{1mm}
\noindent
\textit{Transaction orders.} For two transactions $T_k,T_m \in \txns(H)$, we say that  $T_k$ \emph{precedes} $T_m$ in the \emph{real-time order} of $H$, denote $T_k\prec_H^{RT} T_m$, if $T_k$ is complete in $H$ and the last event of $T_k$ precedes the first event of $T_m$ in $H$. If neither $T_k\prec_H^{RT} T_m$ nor $T_m \prec_H^{RT} T_k$, then $T_k$ and $T_m$ \emph{overlap} in $H$. A history $H$ is \emph{t-sequential} if there are no overlapping transactions in $H$, i.e., every two transactions are related by the real-time order. 

\cmnt {
We now define two order relations
on the set of transactions that are going to be instrumental in our
further definitions: real-time order, and deferred-update order.

For $T_k,T_m \in \txns(H)$, we say that  $T_k$ \emph{precedes} $T_m$
in the \emph{deferred-update order} of $H$, denote $T_k\prec_{H}^{DU} T_m$, if
$T_k$ contains a read $r_k(x,v)$, $T_m$ is committed,
$x\in\Wset(T_m)$, and $r_k(x,v)<_H c_m$. 
}

\cmnt {
\vspace{1mm}
\noindent
\textit{Sub-histories.} A \textit{sub-history}, $SH$ of a history
$H$ denoted as the tuple $\langle \evts{SH},$ $<_{SH}\rangle$ and is
defined as: (1) $<_{SH} \subseteq <_{H}$; (2) $\evts{SH} \subseteq
\evts{H}$; (3) If an event of a transaction $T_k\txns(H)$ is in $SH$ then all
the events of $T_k$ in $H$ should also be in $SH$. 
For a history
$H$, let $R$ be a subset of $txns(H)$, the transactions in
$H$. 
Then $\shist{R}{H}$ denotes  the \ssch{} of $H$ that is
formed  from the \op{s} in $R$. 
}

\vspace{1mm}
\noindent
\textit{Valid and legal histories.} Let $H$ be a history and $r_k(x, v)$ be a successful read {\op} (i.e $v \neq A$) in $H$. Then $r_k(x, v)$, is said to be \emph{\valid} if there is a transaction $T_j$ in $H$ that commits before $r_K$ and $w_j(x, v)$ is in $\evts{T_j}$. Formally, $\langle r_k(x, v)$  is \valid{} $\Rightarrow \exists T_j: (c_j <_{H} r_k(x, v)) \land (w_j(x, v) \in \evts{T_j}) \land (v \neq A) \rangle$. We say that $T_k$ and $T_j$ have a \textit{reads-from} relation in $H$. The history $H$ is \valid{}  if all its successful read \op{s} are \valid. 

We define $r_k(x, v)$'s \textit{\lastw{}} as the latest commit event $c_i$ such that $c_i$ precedes $r_k(x, v)$ in $H$ and $x \in \Wset(T_i)$ ($T_i$ can also be $T_0$). A successful read \op{} $r_k(x, v)$ (i.e
$v \neq A$), is said to be \emph{\svvalid{}} if transaction $T_i$ (which contains  $r_k$'s \lastw{}) also writes $v$ onto $x$. Formally, $\langle r_k(x, v)$ \text{is \svvalid{}} $\Rightarrow (\lwrite{r_k(x, v)}{H} = c_i) \land (w_i(x,v) \in \evts{T_i}) \land (v \neq A) \rangle$. The history $H$ is \svvalid{} if all its successful read \op{s} are \svvalid. Thus from these definitions we get that if $H$ is \svvalid{} then it is also \valid.

\vspace{1mm}
\noindent
\textit{Opacity.} We say that two histories $H$ and $H'$ are \emph{equivalent} if they have the same set of events. Now a history $H$ is said to be \textit{opaque} \cite{GuerKap:2008:PPoPP,tm-book} if $H$ is \valid{} and there exists a t-sequential legal history $S$ such that (1) $S$ is equivalent to $\overline{H}$ and (2) $S$ respects $\prec_{H}^{RT}$, i.e $\prec_{H}^{RT} \subseteq \prec_{S}^{RT}$. By requiring $S$ being equivalent to $\overline{H}$, opacity treats all the incomplete transactions as aborted. 

\cmnt {
Along the same lines, a \valid{} history $H$ is said to be
\textit{strictly serializable} if $\shist{\comm(H)}{H}$ is opaque.
Thus, unlike opacity, strict serializability does not include aborted
transactions in the global serialization order.
}

\section{Graph characterization of Opacity}
\label{sec:gckt}

To prove that a STM system satisfies opacity, it is useful to consider graph characterization of histories. The graph characterization described in this section is based on the characterization by Bernstein and Goodman \cite{BernGood:1983:MCC:TDS} and is slightly different from the characterisation of Gueraroui and Kapalka \cite{GuerKap:2008:PPoPP, tm-book}. 

Consider a history $H$ which consists of multiple version for each \tobj. Like \cite{BernGood:1983:MCC:TDS, GuerKap:2008:PPoPP, tm-book}, we use the notion of \textit{version order}. Given $H$ and a \tobj{} $x$, we define a version order for $x$ as any (nonreflexive) total order on all the versions of $x$ ever written by committed transactions in $H$. It must be noted that the version order may or may not be same as the actual order in which the version of $x$ are generated in $H$. A version order of $H$, denoted as $\ll_H$ is the union of the version orders of all the \tobj{s} in $H$. Using the notation that a committed transaction $T_i$ writing to $x$ creates a version $x_i$, a possible version order for $H1$ of \figref{ex1}, $\ll_{H1}$ is: $\langle x_0 \ll x_1 \rangle, \langle y_0 \ll y_2 \ll y_3 \rangle, \langle z_0 \ll z_1 \ll z_3 \rangle $. 

We define the graph characterisation based on a given version order. Consider a history $H$ and a version order $\ll$. Then a graph denoted as $\opg{H}{\ll}$ (opacity graph) can be defined. There is a vertex for each transaction $T_i$ in $\overline{H}$. The edges of the graph are of three kinds and are defined as follows:

\begin{enumerate}

\item \textit{\rt}(real-time) edges: If $T_i$ commits before $T_j$ starts in $H$, then there is an edge from $v_i$ to $v_j$. This set of edges are referred to as $\rt(H)$.

\item \textit{\rf}(reads-from) edges: If $T_j$ reads $x$ from $T_i$ in $H$, then there is an edge from $v_i$ to $v_j$. Note that in order for this to happen, $T_i$ must have committed before $T_j$ and $c_i <_H r_j(x)$. This set of edges are referred to as $\rf(H)$.

\item \textit{\mv}(multiversion) edges: The \mv{} edges capture the multiversion relations and is based on the version order. Consider a successful read \op{} $r_k(x,v)$ and the write \op{} $w_j(x,v)$ belonging to transaction $T_j$. Here, $r_k(x,v)$ reads $x$ from $w_j(x,v)$ (it must be noted $T_j$ is a committed transaction and $c_j <_H r_k$). Consider a committed transaction $T_i$ which writes to $x$, $w_i(x, u)$ where $u \neq v$. Thus the versions created $x_i, x_j$ are related by $\ll$. Then, if $x_i \ll x_j$ we add an edge from $v_i$ to $v_j$. Otherwise ($x_j \ll x_i$), we add an edge from $v_k$ to $v_i$. This set of edges are referred to as $\mv(H, \ll)$.
\end{enumerate}

Using this construction, the $\opg{H1}{\ll_{H1}}$ for history $H1$ and $\ll_{H1}$ is given above is shown in \figref{opg}. 

Given a history $H$ and a version order $\ll$, consider the graph $\opg{\overline{H}}{\ll}$. While considering the \rt{} edges in this graph, we only consider the real-time relation of $H$ and not $\overline{H}$. It can be seen that $\prec_H^{RT} \subseteq \prec_{\overline{H}}^{RT}$ but with this assumption, $\rt(H) = \rt(\overline{H})$. Hence,  we get the following property, 
\begin{property}
\label{prop:hoverh}
The graphs $\opg{H}{\ll}$ and $\opg{\overline{H}}{\ll}$ are the same for any history $H$ and $\ll$. 
\end{property}
Now we show the correctness of our graph characterization using the following lemmas and theorem. 

\begin{figure}[tbph]
\centerline{\scalebox{0.7}{\input{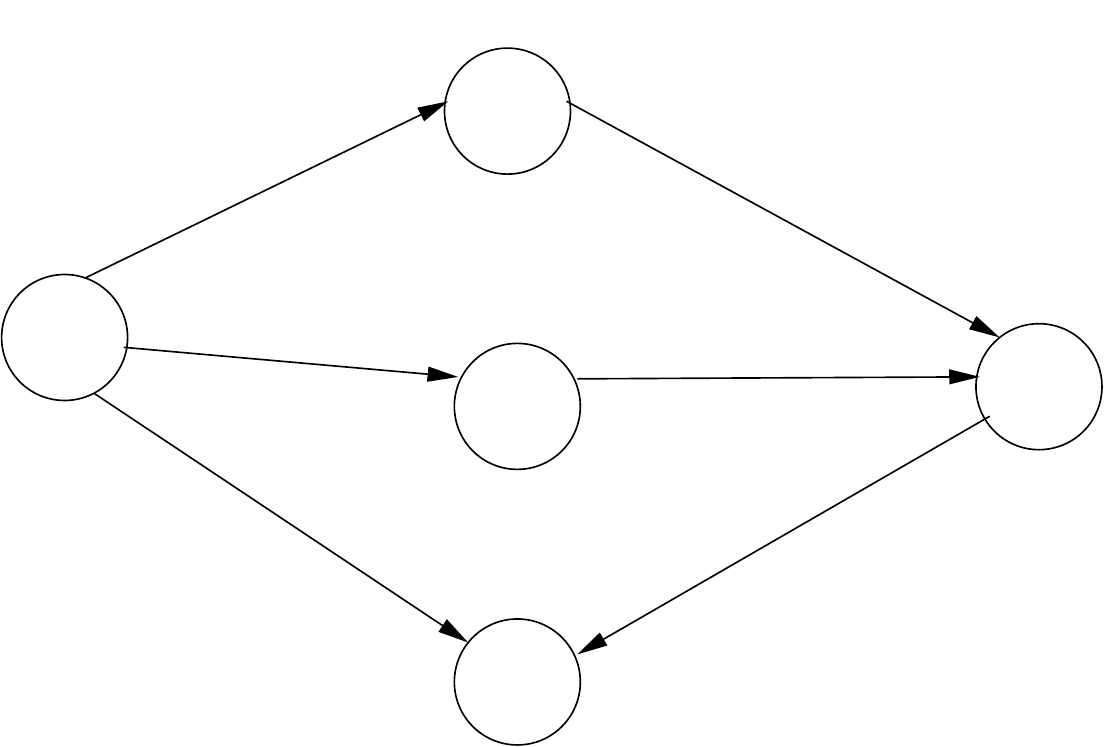_t}}}
\caption{$\opg{H1}{\ll_{H1}}$}
\label{fig:opg}
\end{figure}

\begin{definition}
\label{def:seqver}
For a \tseq{} history $S$, we define a version order $\ll_S$ as follows: For two version $x_i, x_j$ created by committed transactions $T_i, T_j$ in $S$, $\langle x_i \ll_S x_j \Leftrightarrow T_i <_S T_j \rangle $. 
\end{definition}
Now, consider the following lemmas,

\begin{lemma}
\label{lem:seracycle}
Consider a \legal{} \tseq{} history $S$. Then the graph $\opg{S, \ll_S}$ is acyclic.
\end{lemma}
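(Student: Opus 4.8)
The plan is to show that the strict total order $<_S$ on transactions supplied by the t-sequential history $S$ is a topological order for $\opg{S}{\ll_S}$: I will argue that every edge $v_a \to v_b$ of the graph satisfies $T_a <_S T_b$. Once this is established, acyclicity is immediate, since a directed cycle would force some transaction to strictly precede itself under the total order $<_S$, which is impossible. So the whole proof reduces to checking the three edge types of the construction one at a time and verifying that each points ``forward'' in $<_S$.

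The \rt{} and \rf{} edges are the easy cases. An \rt{} edge $v_i \to v_j$ exists because $T_i$ commits before $T_j$ starts; since $S$ is \tseq, its real-time order is total and coincides with $<_S$, giving $T_i <_S T_j$. An \rf{} edge $v_i \to v_j$ exists because $T_j$ reads some $x$ from $T_i$, which forces $T_i$ to be committed with $c_i <_S r_j(x)$; as $T_i$ has already terminated when $T_j$ performs this read and $S$ is \tseq, again $T_i <_S T_j$.

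The substance of the argument is the \mv{} edges. Fix a successful read $r_k(x,v)$ reading from $w_j(x,v)$, and a committed transaction $T_i$, distinct from $T_j$ and $T_k$, with $w_i(x,u) \in \evts{T_i}$ and $u \neq v$ (distinctness from $T_j$ follows from $u \neq v$ and uniqueness of written values). If $x_i \ll_S x_j$ the edge is $v_i \to v_j$, and by Definition~\ref{def:seqver} this says exactly $T_i <_S T_j$, which is forward. The delicate sub-case is $x_j \ll_S x_i$, where the edge is $v_k \to v_i$ and I must show $T_k <_S T_i$. I would argue by contradiction: since $S$ is \tseq{} the transactions are totally ordered, so if $T_k <_S T_i$ fails then $T_i <_S T_k$. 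Combined with $x_j \ll_S x_i$, i.e.\ $T_j <_S T_i$ by Definition~\ref{def:seqver}, this yields $T_j <_S T_i <_S T_k$; because $T_i$ is committed and precedes $T_k$ in the t-sequential order, its commit $c_i$ (which writes $x$) satisfies $c_j <_S c_i <_S r_k(x,v)$.

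This is precisely the configuration that \legality{} rules out, and that step is where I expect the only real obstacle to lie. With a committed writer $T_i$ of $x$ committing strictly between $c_j$ and $r_k(x,v)$, the \lastw{} of $r_k(x,v)$ would be $c_i$ (or some still later committed writer of $x$), not $c_j$; by \legality{} the value returned by $r_k$ would then be $u$ (or a later value), contradicting $r_k(x,v)$ reading $v$ from $T_j$. Hence $T_k <_S T_i$, so the \mv{} edge is forward as well. Having shown that all three edge types respect $<_S$, I conclude that $\opg{S}{\ll_S}$ is acyclic. The remaining details are routine bookkeeping over the edge definitions and the translation between $\ll_S$ and $<_S$ furnished by Definition~\ref{def:seqver}.
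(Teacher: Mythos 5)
Your proof is correct and follows essentially the same route as the paper's: both establish that every edge of $\opg{S}{\ll_S}$ points forward in the total order $<_S$ (the paper phrases this via a numeric $\ordfn$ function) by checking \rt{}, \rf{}, and \mv{} edges case by case, with \legality{} invoked in the $x_j \ll_S x_i$ sub-case to force $T_k <_S T_i$. If anything, you spell out the \legality{} contradiction more explicitly than the paper does, and you implicitly correct a typo in the paper's case split (it writes $\ord{T_i} < \ord{T_j}$ for both sub-cases).
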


\begin{proof}
We numerically order all the transactions in $S$ by their real-time order by using a function \textit{\ordfn}. For two transactions $T_i, T_j$, we define $\ord{T_i} < \ord{T_j} \Leftrightarrow T_i <_S T_j$. Let us analyse the edges of $\opg{S, \ll_S}$ one by one: 
\begin{itemize}
\item \rt{} edges: It can be seen that all the \rt{} edges go from a lower \ordfn{} transaction to a higher \ordfn{} transaction. 

\item \rf{} edges: If $T_j$ reads $x$ from $T_i$ in $S$ then $T_i$ is a committed transaction with $\ord{T_i} < \ord{T_j}$. Thus, all the \rf{} edges from a lower \ordfn{} transaction to a higher \ordfn{} transaction.

\item \mv{} edges: Consider a successful read \op{} $r_k(x, v)$ and a committed transaction $T_i$ writing $u$ to $x$ where $u \neq v$. Let $c_j$ be $r_k(x, v)$'s \lastw. Thus, $w_j(x, v) \in \evts{T_j}$. Thus, we have that $\ord{T_j} < \ord{T_k}$. Now there are two cases w.r.t $T_i$: (1) Suppose $\ord{T_i} < \ord{T_j}$. We now have that $T_i \ll T_j$. In this case, the mv edge is from $T_i$ to $T_j$. (2) Suppose $\ord{T_i} < \ord{T_j}$ which implies that $T_j \ll T_i$. Since $S$ is legal, we get that $\ord{T_k} < \ord{T_i}$. This case also implies that there is an edge from $\ord{T_k}$ to $\ord{T_i}$. Hence, in this case as well the \mv{} edges go from a transaction with lower \ordfn{} to a transaction with higher \ordfn{}. 
\end{itemize}

Thus, in all the three cases the edges go from a lower \ordfn{} transaction to higher \ordfn{} transaction. This implies that the the graph is acyclic. \qed
\end{proof}

\begin{lemma}
\label{lem:eqv_hist_mvorder}
Consider two histories $H, H'$ that are equivalent to each other. Consider a version order $\ll_H$ on the \tobj{s} created by $H$. The mv edges $\mv(H, \ll_H)$ induced by $\ll_H$ are the same in $H$ and $H'$.
\end{lemma}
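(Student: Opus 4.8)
The plan is to show that every ingredient that determines an mv edge is a function of the event set alone (together with the fixed version order $\ll_H$), and therefore is insensitive to the difference in ordering between $H$ and $H'$. Recall that two histories are equivalent precisely when $\evts{H} = \evts{H'}$. In particular $H$ and $H'$ contain the same successful read operations $r_k(x,v)$, the same write operations $w_i(x,u)$, and --- since terminal operations such as $c_k$ are themselves events --- the same set of committed transactions.

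Next I would unpack the definition of an mv edge into the data it consumes. Each such edge arises from: (i) a successful read $r_k(x,v)$; (ii) the committed transaction $T_j$ with $w_j(x,v) \in \evts{T_j}$ that this read reads from; and (iii) a committed transaction $T_i$ with $w_i(x,u) \in \evts{T_i}$ for some $u \neq v$; the orientation of the edge is then decided entirely by comparing the versions $x_i$ and $x_j$ under $\ll_H$. Items (i) and (iii) are manifestly determined by $\evts{H}$ and by which transactions commit, both of which coincide for $H$ and $H'$.

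The crux is item (ii), the reads-from partner, and this is exactly where the standing assumption that all written values are unique does the work. Because distinct writes carry distinct values, for a given successful read $r_k(x,v)$ there is at most one write $w_j(x,v)$ in $\evts{H}$ matching the value $v$, hence at most one transaction $T_j$ from which $r_k$ can read. Thus the pairing of each read with its writer is a function of $\evts{H}$ alone and does not reference the total order $<_H$ at all; consequently it is identical in $H$ and $H'$. I would present this as the main (and only delicate) step, since without value-uniqueness the read-from partner could in principle depend on the ordering of the candidate committed writes relative to the read, and the argument would break.

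Finally, I would assemble the pieces: since $\evts{H}=\evts{H'}$, the committed transactions coincide, the read-to-writer pairing coincides, and the version order $\ll_H$ is the same object in both cases, so the triples $(r_k(x,v),T_j,T_i)$ that generate mv edges are identical and each is oriented identically by $\ll_H$. Hence $\mv(H,\ll_H)=\mv(H',\ll_H)$, which is the claim. Because no case analysis beyond the single observation about uniqueness is required, I expect the write-up to be short.
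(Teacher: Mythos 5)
Your proposal is correct and follows essentially the same route as the paper: the paper's proof simply asserts that the mv edges depend only on the event set and the version order $\ll_H$, not on the ordering of events, and concludes equality. Your write-up is in fact more careful than the paper's, since you explicitly identify the reads-from pairing as the one potentially order-dependent ingredient and use the paper's value-uniqueness assumption to show it is determined by $\evts{H}$ alone --- a step the paper leaves implicit.
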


\begin{proof}
Since the histories are equivalent to each other, the version order $\ll_H$ is applicable to both of them. It can be seen that the \mv{} edges depend only on events of the history and version order $\ll$. It does not depend on the ordering of the events in $H$. Hence, the \mv{} edges of $H$ and $H'$ are equivalent to each other. \qed
\end{proof}

\noindent Using these lemmas, we prove the following theorem. 

\begin{theorem}
\label{thm:opg}
A \valid{} history H is opaque iff there exists a version order $\ll_H$ such that $\opg{H}{\ll}$ is acyclic.
\end{theorem}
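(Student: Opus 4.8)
The plan is to prove the two directions separately, leaning on the three results just established. Throughout I would work with $\overline{H}$ rather than $H$, which is justified because by \propref{hoverh} we have $\opg{H}{\ll} = \opg{\overline{H}}{\ll}$ for every $\ll$; thus acyclicity of one is equivalent to acyclicity of the other, and opacity is itself defined in terms of $\overline{H}$.

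For the forward direction ($\Rightarrow$), suppose $H$ is opaque. Then there is a \tseq{} \legal{} history $S$ equivalent to $\overline{H}$ with $\prec_H^{RT} \subseteq \prec_S^{RT}$. I would take the candidate version order to be $\ll_S$ from \defref{seqver}. By \lemref{seracycle} the graph $\opg{S}{\ll_S}$ is acyclic, so it suffices to show that every edge of $\opg{\overline{H}}{\ll_S}$ is also an edge of $\opg{S}{\ll_S}$ (a subgraph of an acyclic graph is acyclic). I would check this edge by edge: the \rf{} edges coincide because $S$ and $\overline{H}$ have the same events and the reads-from relation is determined by the events alone; the \mv{} edges coincide by \lemref{eqv_hist_mvorder} applied to the equivalent histories $\overline{H}$ and $S$ under the common order $\ll_S$; and each \rt{} edge $T_i \to T_j$ of $\overline{H}$ gives $T_i \prec_H^{RT} T_j$, hence $T_i <_S T_j$, which is exactly an \rt{} edge of the t-sequential history $S$. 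This shows $\opg{\overline{H}}{\ll_S}$, and therefore $\opg{H}{\ll_S}$, is acyclic.

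For the converse ($\Leftarrow$), suppose a version order $\ll_H$ makes $\opg{\overline{H}}{\ll_H}$ acyclic. Since the graph is finite and acyclic, I would topologically sort its vertices and let $S$ be the \tseq{} history obtained by listing the transactions in this order, keeping each transaction's internal event order intact. By construction $S$ is equivalent to $\overline{H}$, and because every \rt{} edge of $\overline{H}$ is respected by the topological order, $S$ respects $\prec_{\overline{H}}^{RT}$ and hence $\prec_H^{RT}$. The remaining and main obstacle is to prove that $S$ is \legal. Given a successful read $r_k(x,v)$, validity of $H$ guarantees it reads from a committed $T_j$ with $w_j(x,v)$, and the \rf{} edge forces $T_j <_S T_k$. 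I would then suppose toward a contradiction that some committed $T_i$ with $w_i(x,u)$, $u \neq v$, lies strictly between $T_j$ and $T_k$ in $S$, and derive a contradiction from the \mv{} edge relating $x_i$ and $x_j$: if $x_i \ll_H x_j$ the edge $T_i \to T_j$ forces $T_i <_S T_j$, while if $x_j \ll_H x_i$ the edge $T_k \to T_i$ forces $T_k <_S T_i$, each contradicting the assumed placement of $T_i$. Hence $T_j$ supplies $r_k(x,v)$'s \lastw{}, so $S$ is legal and $H$ is therefore opaque.
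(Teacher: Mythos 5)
Your proposal is correct and follows essentially the same route as the paper's proof: the opaque-implies-acyclic direction uses $\ll_S$ from \defref{seqver}, \lemref{seracycle}, and \lemref{eqv_hist_mvorder} to show $\opg{\overline{H}}{\ll_S}$ is a subgraph of the acyclic $\opg{S}{\ll_S}$, while the converse builds $S$ by topological sort and establishes legality via the same mv-edge case analysis (your phrasing, which shows no committed writer $T_i$ can sit between $T_j$ and $T_k$ rather than assuming a violating lastWrite, is a minor stylistic variant of the paper's contradiction). No gaps.
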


\begin{proof}
\textbf{(if part):} Here we have a a version order $\ll_H$ such that $G_H=\opg{H}{\ll}$ is acyclic. Now we have to show that $H$ is opaque. Since the $G_H$ is acyclic, a topological sort can be obtained on all the vertices of $G_H$. Using the topological sort, we can generate a \tseq{} history $S$. It can be seen that $S$ is equivalent to $\overline{H}$. Since $S$ is obtained by a topological sort on $G_H$ which maintains the real-time edges of $H$, it can be seen that $S$ respects the \rt{} order of $H$, i.e $\prec_H^{RT} \subseteq \prec_S^{RT}$. 

Similarly, since $G_H$ maintains reads-from order of $H$, it can be seen that if $T_j$ reads $x$ from $T_i$ in $H$ then $T_i$ terminates before $r_i(x)$ and $T_j$ in $S$. Thus, $S$ is \valid. Now it remains to be shown that $S$ is \legal. We prove this using contradiction. Assume that $S$ is not legal. Thus, there is a successful read \op{} $r_k(x, v)$ such that its \lastw{} in $S$ is $c_i$ and $T_i$ writes value $u (\neq v)$ to $x$, i.e $w_i(x, u) \in \evts{T_i}$. Further, we also have that there is a transaction $T_j$ that writes $v$ to $x$, i.e $w_j(x, v) \in \evts{T_j}$. Since $S$ is \valid, as shown above, we have that $T_j \prec_{S}^{RT} T_i \prec_{S}^{RT} T_k$.

Now in $\ll_H$, if $x_i \ll_H x_j$ then there is an edge from $T_i$ to $T_j$ in $G_H$. Otherwise ($x_j \ll_H x_i$), there is an edge from $T_k$ to $T_i$. Thus in either case $T_i$ can not be in between $T_j$ and $T_k$ in $S$ contradicting our assumption. This shows that $S$ is legal. 

\textbf{(Only if part):} Here we are given that $H$ is opaque and we have to show that there exists a version order $\ll$ such that $G_H=\opg{H}{\ll} (=\opg{\overline{H}}{\ll}$, \propref{hoverh}) is acyclic. Since $H$ is opaque there exists a \legal{} \tseq{} history $S$ equivalent to $\overline{H}$ such that it respects real-time order of $H$. Now, we define a version order for $S$, $\ll_S$ as in \defref{seqver}. Since the $S$ is equivalent to $\overline{H}$, $\ll_S$ is applicable to $\overline{H}$ as well. From \lemref{seracycle}, we get that $G_S=\opg{S}{\ll_S}$ is acyclic. Now consider $G_H = \opg{\overline{H}}{\ll_S}$. The vertices of $G_H$ are the same as $G_S$. Coming to the edges, 

\begin{itemize}
\item \rt{} edges: We have that $S$ respects real-time order of $H$, i.e $\prec_{H}^{RT} \subseteq \prec_{S}^{RT}$. Hence, all the \rt{} edges of $H$ are a subset of subset of $S$. 

\item \rf{} edges: Since $\overline{H}$ and $S$ are equivalent, the reads-from relation of $\overline{H}$ and $S$ are the same. Hence, the \rf{} edges are the same in $G_H$ and $G_S$. 

\item \mv{} edges: Since the version-order and the \op{s} of the $H$ and $S$ are the same, from \lemref{eqv_hist_mvorder} it can be seen that $\overline{H}$ and $S$ have the same \mv{} edges as well.
\end{itemize}

Thus, the graph $G_H$ is a subgraph of $G_S$. Since we already know that $G_S$ is acyclic from \lemref{seracycle}, we get that $G_H$ is also acyclic. \qed
\end{proof}

\section{Multiversion Timestamp Ordering (MVTO) Algorithm}
\label{sec:mvto}
We describe a timestamp based algorithm for multi-version STM systems, multiversion timestamp ordering (\mvto{}) algorithm. We then prove that our algorithm satisfies opacity \cite{tm-book, GuerKap:2008:PPoPP} using the graph characterization developed in the previous section. 

\subsection{The working principle} 
\label{subsec:princ} 
In our algorithm, each transaction, $T_i$ is assigned a unique timestamp, $i$, when it is initially invoked by a thread. We denote $i$ to be the id as well as the timestamp of the transaction $T_i$. Intuitively, the timestamp tells the ``time" at which the transaction began. It is a monotonically increasing number assigned to each transaction and is numerically greater than the timestamps of all the transactions invoked so far. All read and write operations carry the timestamp of the transaction that issued it. When an update transaction $T_i$ commits, the algorithm creates new version of all the \tobj{s} it writes to. All these versions have the timestamp $i$. 

Now we describe the main idea behind read, write and \tryc{} \op{s} executed by a transaction $T_i$. These ideas are based on the read and write steps for timestamp algorithm developed for databases by Bernstein and Goodman \cite{BernGood:1983:MCC:TDS}: 

\cmnt {
A STM system exports the functions:\begtrans{}, $\read_i, \writei_i$ and $\tryci_i$. A thread invokes a transaction with a \begtrans{} function which returns a unique transaction id which is the timestamp of the transactions. This timestamp is numerically greater than the timestamps of all the transactions invoked so far. This thread invokes future functions of the transaction using this timestamp. We use the notation $T_i$ to denote a transaction where $i$ is the timestamp of $T_i$. 
}

\begin{enumerate}
\item \textbf{read rule:} $T_i$ on invoking $r_i(x)$ reads the value $v$, where $v$ is the value written by a transaction $T_j$ that commits before $r_i(x)$ and $j$ is the largest timestamp $\leq i$.
\item \textbf{write rule:} $T_i$ writes into local memory.
\item \textbf{commit rule:} $T_i$ on invoking \tryc{} \op{} checks for each \tobj{} $x$, in its \Wset:
\begin{enumerate}
\item If a transaction $T_k$ has read $x$ from $T_j$, i.e. $r_k(x, v) \in \evts{T_k}$ and $w_j(x, v) \in \evts{T_j}$ and $j < i < k$, then $\tryc_i$ returns abort, 
\item otherwise, the transaction is allowed to commit.
\end{enumerate}
\end{enumerate}

\subsection{Data Structures and Pseudocode}
\label{subsec:code}

The algorithm maintains the following data structures. For each transaction $T_i$:

\begin{itemize}
\item $T_i.RS$(read set): It is a list of data tuples ($d\_tuples$) of the form $\langle x,v \rangle$, where $x$ is the t-object and $v$ is the value read from the transaction $T_i$.
\item $T_i.WS$(write set): It is a list of ($d\_tuples$) of the form $\langle x,v \rangle$, where $x$ is the \tobj{} to which transaction $T_i$ writes the value $v$.
\end{itemize}

For each transaction object ($t\_object$) $x$:
\begin{itemize}
\item $x.vl$(version list): It is a list consisting of version tuples ($v\_tuple$) of the form $\langle ts,v,rl \rangle$ where $ts$ is the timestamp of a committed transaction that writes the value $v$ to $x$. The list $rl$ is the read list consisting of a set of transactions that have read the value $v$ (described below). Informally the version list consists of all the committed transaction that have ever written to this t-object and the set of corresponding transactions that have read the value $v$ on $x$.

\item $rl$(read list): This list contains all the read transaction tuples ($rt\_tuples$) of the form $\langle j \rangle$. The read list $rl$ is stored in each tuple of the version list described above.
\end{itemize}
Figure 3 illustrates the how the version list and read list are managed.

\begin{figure}
\label{fig:ds}
\centering
\includegraphics[scale=.5]{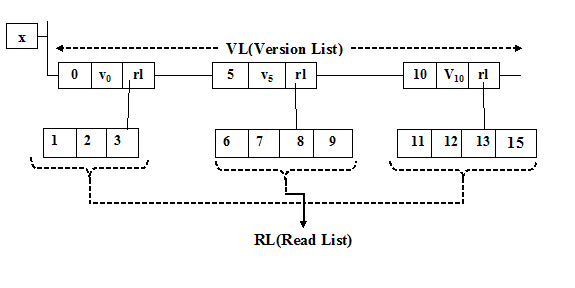}
\caption{Data Structures}
\end{figure}

In addition, the algorithm maintains two global data-structures:
\begin{itemize}
\item \tcntr: This counter is used to generate the ids/timestamp for a newly invoked transaction. This is incremented everytime a new transaction is invoked.
\item \livel: This list keeps track of all the transactions that are currently incomplete or live. When a transaction begins, its id is added to this list. When it terminates (by abort or commit), the id is deleted from this list. 
\end{itemize}

%
The STM system consists of the following operations/functions. These are executed whenever a transaction begins, reads, write or tries to commit:
 
\vspace{1mm}
\noindent
\textit{$\init():$} This operation initializes the STM system. It is assumed that the STM system knows all the \tobj{s} ever accessed. All these \tobj{s} are initialized with value 0 by the initial transaction $T_0$ in this operation. A version tuple $\langle 0, 0, nil \rangle$ is inserted into all the version list of all the \tobj{s}. 

\vspace{1mm}
\noindent
\textit{$\begtrans():$} A thread invokes a transaction by executing this operation. It returns an unique transaction identifier which is also its timestamp. The id is used in all other operations exported by the STM system. The id is further stored in the $\livel$.

\vspace{1mm}
\noindent
\textit{$read_i(x):$} To read any \tobj{} by transaction $i$, this operation is invoked. First, the \tobj{} $x$ is locked. Then the version list of $x$ is searched to identify the correct $version\_tuple$ (i.e the version created by a writing transaction). From the version-list, the tuple with the largest timestamp  less than $i$, say $\langle j, v \rangle$ is identified. Then the value $v$ written by transaction $j$, is returned. 

\vspace{1mm}
\noindent
\textit{$\find(i,x):$} This function is invoked by $read_i(x)$ and finds the tuple $\langle j,v,rl \rangle$ having the largest timestamp $j$ value smaller than $i$ (lts). 

\vspace{1mm}
\noindent
\textit{$write_i(x,v):$} Here write is performed in the local memory. This operation appends the data tuple $\langle x, v \rangle$ into the WS of transaction $T_i$.

\vspace{1mm}
\noindent
\textit{$\tryc_i():$} This operation is invoked when a transaction $T_i$ has completed all its \op{s} and wants to commit. This operation first checks whether $T_i$ is read only or not. If it is read only transaction then it returns commit. Otherwise, for each \tobj{} $x$ (accessed in a predefined order) in $T_i$'s write set, the following check is performed: if timestamp of $T_i$, $i$ between the timestamps of the $T_j$ and $T_k$, where transaction $T_k$ reads $x$ from transaction $T_j$, i.e $j < i < k$, then the transaction $T_i$ is aborted. 

If this check succeeds for all the \tobj{s} written by $T_i$, then the version tuples are appended to the version lists and the transaction $T_i$ is committed. Before returning either commit or abort, the transaction id $i$ is removed from \livel.

The system orders all the \tobj{s} ever accessed as $x_1, x_2, ...., x_n$ by any transaction (assuming that the system accesses a total of $n$ \tobj{s}). In this \op, each transaction locks and access \tobj{s} in the increasing order which ensures that the system does not deadlock.

\vspace{1mm}
\noindent
\textit{$\checkv(i,x):$} This function checks the version list of $x$. For all version tuples $\langle j, v, rl \rangle$ in $x.vl$ and for all transactions $T_k$ in $rl$, it checks if the timestamp of $T_i$ is between the timestamp of the $T_j$ and $T_k$, i.e $j < i < k$. If so, it returns true else false.

%
\begin{algorithm} 
\label{alg:init} 
\caption{STM $\init()$: Invoked at the start of the STM system. Initializes all the \tobj{s} used by the STM System}
\begin{algorithmic}[1]
\ForAll {$x$ used by the STM System}
\State /* $T_0$ is initializing $x$ */
\State add $\langle 0, 0, nil \rangle$ to $x.vl$; \label{lin:t0-init} 
\EndFor;
\end{algorithmic}
\end{algorithm}

\begin{algorithm} 
\label{alg:begin} 
\caption{STM $\begtrans()$: Invoked by a thread to being a new transaction $T_i$}
\begin{algorithmic}[1]
\State lock $\livel$;
\State \Comment{Store the latest value of $\tcntr$ in $i$.} 
\State $i = \tcntr$; 
\State $\tcntr = \tcntr + 1$; 
\State add $i$ to $\livel$; 
\State unlock $\livel$;
\State return $i$;
\end{algorithmic}
\end{algorithm}

\begin{algorithm}  
\label{alg:read} 
\caption{STM $read_i(x)$: A Transaction $T_i$ reads t-object $x$}
\begin{algorithmic}[1]
\State lock $x$;
\label{lin:rlock}
\State \Comment{From $x.vls$, identify the right $version\_tuple$.} 
\State $\langle j,v,rl \rangle  = \find(i,x)$;	
\State Append $i$ into $rl$; \Comment{Add $i$ into $j$'s $rl$.}
\State \emph{unlock $x$};
\State return $(v)$; \Comment{v is the value returned}
\end{algorithmic}
\end{algorithm}

\begin{algorithm}  
\label{alg:select}
\caption{$\find(i,x)$: Finds the tuple $\langle j,v,rl \rangle$ created by the transaction $T_j$ with the largest timestamp smaller than $i$}
\begin{algorithmic}[1]
\State \Comment{Initialize $closest\_tuple$}
\State $closest\_tuple = \langle 0,0,nil \rangle$;
\ForAll {$\langle k,v,rl \rangle \in x.vl$}
\If {$(k < i)$ and $(closest\_tuple.ts < k)$}
\State $closest\_tuple = \langle k,v, rl \rangle$; 
\EndIf;
\EndFor;
\State return $(closest\_tuple)$;
\end{algorithmic}
\end{algorithm}

\begin{algorithm}  
\label{alg:write} 
\caption{STM $write_i(x,v)$: A Transaction $T_i$ writes into local memory}
\begin{algorithmic}[1]
\State Append the $d\_tuple \langle x,v \rangle$ to $T_i.WS$.
\State return $ok$;
\end{algorithmic}
\end{algorithm}

\begin{algorithm}  
\label{alg:tryc} 
\caption{STM $\tryc()$: Returns $ok$ on commit else return Abort}
\begin{algorithmic}[1]
\If {$(T_i.WS == NULL)$}
\State $\remid(i)$;
\State return $ok$; \Comment{A read-only transaction.}
\EndIf;	
\ForAll {$d\_tuple (x,v)$ in $T_i \cdot WS$}
\State /* Lock the \tobj{s} in a predefined order to avoid deadlocks */
\State Lock $x$;	
\If {$(\checkv(i,x) == false)$}
\State unlock all the variables locked so far;
\State $\remid(i)$;
\State return $Abort$;
\EndIf;
\EndFor;
\State /* Successfully checked for all the write variables and not yet aborted. So the new write versions can be inserted. */
\ForAll {$d\_tuples \langle x,v \rangle $ in $T_i.WS$}
\State insert $v\_tuple \langle i,v,nil \rangle$ into $x.vl$ in the increasing order;  \label{lin:ins-vtuple}
\EndFor;
\State unlock all the variables;
\State $\remid(i)$;
\State return $ok$;
\end{algorithmic}
\end{algorithm}

\begin{algorithm}  
\label{alg:checkVersion} 
\caption{$\checkv(i,x)$:Checks the version list; it returns True or false}
\begin{algorithmic}[1]
\ForAll {$v\_tuples \langle j,v,rl \rangle $ in $x \cdot vl$}
\ForAll {$T_k$ in $rl$}
\State /* $T_k$ has already read the version created by $T_j$ */
\If {$(j < i < k)$}   
\State return $false$;  
\EndIf;
\EndFor;
\EndFor;
\State return $true$;
\end{algorithmic}
\end{algorithm}

\begin{algorithm}  
\label{alg:removeid} 
\caption{$\remid(i)$:Removes transaction id $i$ from the $\livel$}
\begin{algorithmic}[1]
\State lock $\livel$; 
\State remove $i$ from $\livel$; 
\State unlock $\livel$; 
\end{algorithmic}
\end{algorithm}

\subsection{Proof of MVTO protocol}
\label{subsec:proof}

In this sub-section, we will prove that our implementation satisfies opacity. Consider the history $H$ generated by \mvto{} algorithm. Recall that only the \begtrans{}, read, and \tryc{} \op{s} access shared memory. Hence, we call such operations \emph{\memop{s}}. 

Note that $H$ is not necessarily sequential: the transactional \op{s} can execute in overlapping manner. To reason about correctness we have to prove $H$ is opaque. Since we defined opacity for histories which are sequential, we order all the overlapping \op{s} in $H$ to get an equivalent sequential history. We then show that this resulting sequential history satisfies \op.

We order overlapping \memop{s} of $H$ as follows: (1) two overlapping \begtrans{} \op{s} based on the order in which they obtain lock over \tcntr; (2) two read \op{s} accessing the same \tobj{} $x$ by their order of obtaining lock over $x$; (3) a read $r_i(x)$ and a $\tryc_j$, of a transaction $T_j$ which has written to $x$, are similarly ordered by their order of obtaining lock over $x$; (4) \begtrans{} and a \tryc{} \op{s} are ordered by their order of obtaining locks over $\livel$; (5) similarly, two \tryc{} \op{s} based on the order in which they obtain lock over $\livel$.

Combining the real-time order of events with above mentioned order, we obtain a partial order which we denote as \emph{$\locko_H$}. (It is a partial order since it does not order overlapping read \op{s} on different \tobj{s} or an overlapping read and a \tryc{} which do not access any common \tobj). 

In order for $H$ to be to sequential, all its \op{s} must be ordered. Let $\alpha$ be a total order or \emph{linearization} of \op{s} of $H$ such that when this order is applied to $H$, it is sequential. We denote the resulting history as $H^{\alpha} = \seq{\alpha}{H}$. 
We now argue about the \validity{} of histories generated by the algorithm. 

\begin{lemma}
\label{lem:histvalid}
Consider a history $H$ generated by the algorithm. Let $\alpha$ be a linearization of $H$ which respects $\locko_H$, i.e. $\locko_H \subseteq \alpha$. Then $H^{\alpha} = \seq{\alpha}{H}$ is \valid. 
\end{lemma}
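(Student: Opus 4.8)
The plan is to unfold the definition of \validity{} and verify it directly for every successful read in $H^\alpha$. Recall that $H^\alpha$ is \valid{} precisely when, for each successful read $r_i(x,v)$ (with $v \neq A$), there exists a committed transaction $T_j$ such that $c_j <_{H^\alpha} r_i(x,v)$ and $w_j(x,v) \in \evts{T_j}$. So I would fix an arbitrary successful read $r_i(x,v)$ occurring in $H^\alpha$ and exhibit the required writer $T_j$, then conclude by generality over the reads.

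First I would identify $T_j$ from the code. Since $r_i(x,v)$ is generated by the algorithm, its returned value $v$ is produced by $\find(i,x)$ (\algoref{select}), which returns the version tuple $\langle j,v,rl\rangle \in x.vl$ having the largest timestamp $j < i$ present in $x.vl$ at the moment $r_i$ holds the lock on $x$. Entries enter $x.vl$ in only two places: by $T_0$ at \lineref{t0-init}, and by a transaction at \lineref{ins-vtuple} of \tryc{}, which is reached only after all the commit-time checks succeed and immediately before the transaction returns $ok$. Hence the tuple $\langle j,v,rl\rangle$ was placed there by a committed transaction $T_j$ (possibly $T_0$, in which case $v=0$), and by construction $w_j(x,v)\in\evts{T_j}$. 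The model's assumption that all written values are unique guarantees that $T_j$ is the unique such writer, so there is no ambiguity about which transaction the read reads from.

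The main step, and the one I expect to be the crux, is establishing the linearization constraint $c_j <_{H^\alpha} r_i(x,v)$. Both $\tryc_j$ (when it inserts the version of $x$) and $r_i(x)$ acquire the lock on $x$, so by ordering rule (3) used to build $\locko_H$, these two \memop{s} are ordered by the order in which they obtain that lock. I would argue by contradiction: if $r_i(x)$ had obtained the lock on $x$ before $\tryc_j$ did, then at read time $T_j$ would not yet have executed its insertion at \lineref{ins-vtuple} (that insertion occurs while $T_j$ holds the lock on $x$, which $T_j$ could acquire only after $r_i$ released it), and so $\find(i,x)$ could not have returned $T_j$'s tuple, contradicting the choice of $T_j$. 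Therefore $\tryc_j$ obtains the lock on $x$ before $r_i(x)$, so $\tryc_j$ precedes $r_i(x)$ in $\locko_H$; since $\alpha$ respects $\locko_H$, this yields $c_j <_{H^\alpha} r_i(x,v)$. For $j=0$ the claim is immediate, since $\init()$, and hence $c_0$, precedes every other event.

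Combining these observations, the chosen read satisfies $(c_j <_{H^\alpha} r_i(x,v)) \land (w_j(x,v)\in\evts{T_j}) \land (v\neq A)$, so it is \valid. As $r_i(x,v)$ was an arbitrary successful read, every successful read of $H^\alpha$ is \valid, and therefore $H^\alpha$ is \valid. The only genuinely delicate point is the lock-order argument of the third paragraph, which is where the correspondence between the concrete locking behaviour of the implementation and the abstract order $\locko_H$ must be used carefully; the rest is bookkeeping against the pseudocode.
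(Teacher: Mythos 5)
Your proposal is correct and follows essentially the same route as the paper's own proof: both identify the writer $T_j$ as the (committed) transaction that inserted the version tuple into $x.vl$, use the fact that this insertion and the read both occur under the lock on $x$ so that ordering rule (3) of $\locko_H$ forces $\tryc_j$ before $r_i(x,v)$ in any linearization respecting $\locko_H$, and treat $T_0$ as a separate immediate case. Your write-up is merely more explicit than the paper's (naming the code lines, invoking value-uniqueness, and phrasing the lock-order step as a contradiction), but no new idea or decomposition is involved.
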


\begin{proof}
Consider a successful read \op{} $r_i(x)$ that returns value $v$. The read function first obtains lock on \tobj{} $x$ (Algorithm $read$, Line \ref{lin:rlock}). Thus the value $v$ returned by the read function must have already been stored in $x$'s version list by a transaction, say $T_j$ when it successfully returned ok from its \tryc{} \op{} (if $T_j \neq T_0$). For this to have occurred, $T_j$ must have successfully locked and released $x$ prior to $T_i$'s locking \op. Thus from the definition of $\locko_H$, we get that $\tryc_j(ok)$ occurs before $r_i(x,v)$ which also holds in $\alpha$. 

If $T_j$ is $T_0$, then by our assumption we have that $T_j$ committed before the start of any \op{} in $H$. Hence, this automatically implies that in both cases $H^{\alpha}$ is \valid. 
\end{proof}

It can be seen that for proving correctness, any linearization of a history $H$ is sufficient as long as the linearization respects $\locko_H$. The following lemma formalizes this intuition, 
\begin{lemma}
\label{lem:histseq}
Consider a history $H$. Let $\alpha$ and $\beta$ be two linearizations of $H$ such that both of them respect $\locko_H$, i.e. $\locko_H \subseteq \alpha$ and $\locko_H \subseteq \beta$. Then, (1) $H^{\alpha} = \seq{\alpha}{H}$ is opaque iff $H^{\beta} = \seq{\beta}{H}$. 
\end{lemma}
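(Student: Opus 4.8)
The plan is to reduce the claim to the graph characterisation of \thmref{opg}, which states that a \valid{} history is opaque exactly when some version order makes its opacity graph acyclic. Since $\alpha$ and $\beta$ are linearizations of the same $H$, the histories $H^{\alpha}$ and $H^{\beta}$ have the same set of events and are therefore equivalent. First I would invoke \lemref{histvalid} to note that both $H^{\alpha}$ and $H^{\beta}$ are \valid{} (each respects $\locko_H$ by hypothesis), so \thmref{opg} applies to both. The heart of the argument is then the following claim: for \emph{every} version order $\ll$, the graphs $\opg{H^{\alpha}}{\ll}$ and $\opg{H^{\beta}}{\ll}$ are identical, with the same vertices and the same edge sets. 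Granting this, a version order making one graph acyclic makes the other acyclic too, so by \thmref{opg} one history is opaque iff the other is, which is precisely the statement.

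To establish the claim I would treat the three edge types separately. The vertex sets agree because $H^{\alpha}$ and $H^{\beta}$ contain the same transactions, and since membership in $\comm$, $\aborted$, or $\id{incomplete}$ is determined by a transaction's events, their completions insert the same abort events; hence by \propref{hoverh} we may reason with $\opg{H^{\alpha}}{\ll}$ and $\opg{H^{\beta}}{\ll}$ directly. The reads-from edges $\rf(H^{\alpha})$ and $\rf(H^{\beta})$ coincide, because a reads-from pair is fixed by which read returns which value, and by the uniqueness of written values the writer is determined by the event $r_k(x,v)$ alone, independently of the ordering. The multiversion edges $\mv(H^{\alpha}, \ll)$ and $\mv(H^{\beta}, \ll)$ coincide by \lemref{eqv_hist_mvorder}, which says precisely that equivalent histories induce the same mv edges under a common version order.

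The remaining and delicate case is the real-time edges, which I expect to be the main obstacle, since these are the only edges whose definition refers to the order of the history rather than merely to its events. I would show that the transaction-level real-time orders of $H^{\alpha}$ and $H^{\beta}$ both coincide with that of $H$, whence $\rt(H^{\alpha}) = \rt(H^{\beta})$. If $T_i \prec_H^{RT} T_j$, the last event of $T_i$ precedes the first event of $T_j$ in the real-time order of $H$; as $\locko_H$ contains that real-time order and $\alpha \supseteq \locko_H$ and $\beta \supseteq \locko_H$, the precedence persists in both, giving the edge in both. If instead $T_i$ and $T_j$ overlap in $H$, assume without loss of generality that $T_i$'s first event precedes $T_j$'s first event in $H$; overlap then forces $T_j$'s first event to precede $T_i$'s last event, and transitivity gives $T_i$'s first event before $T_j$'s last event. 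All these precedences belong to $\locko_H$ and are thus respected by every linearization, so in both $H^{\alpha}$ and $H^{\beta}$ each of the two transactions starts before the other finishes; neither real-time edge can appear in either graph. Therefore $\prec^{RT}_{H^{\alpha}} = \prec^{RT}_{H} = \prec^{RT}_{H^{\beta}}$ at the transaction level, the real-time edges agree, and together with the reads-from and multiversion edges the graphs $\opg{H^{\alpha}}{\ll}$ and $\opg{H^{\beta}}{\ll}$ are identical for every $\ll$. Applying \thmref{opg} to the common acyclicity condition then yields that $H^{\alpha}$ is opaque iff $H^{\beta}$ is, completing the proof.
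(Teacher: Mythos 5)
You take a genuinely different route from the paper: the paper argues directly from the definition of opacity, taking the legal \tseq{} witness $S$ for $H^{\alpha}$ and showing the very same $S$ works for $H^{\beta}$, whereas you reduce to the graph characterization of \thmref{opg} and claim that $\opg{H^{\alpha}}{\ll}$ and $\opg{H^{\beta}}{\ll}$ coincide for every version order $\ll$. The reduction itself is legitimate, and your treatment of the vertices, the \rf{} edges (uniqueness of written values) and the \mv{} edges (via \lemref{eqv_hist_mvorder}) is sound. The genuine gap is in the \rt{} case, exactly the one you call delicate. Your claimed equality $\prec^{RT}_{H^{\alpha}} = \prec^{RT}_{H} = \prec^{RT}_{H^{\beta}}$ is false in general, and the argument for it conflates two different orders. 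The history $H$ produced by the algorithm is not sequential: its \op{s} are intervals spanning an invocation and a response. Your dichotomy ``either $T_i$'s last event precedes $T_j$'s first event or vice versa'' holds only at the level of invocation/response events; it does not place the two \emph{\op{s}} themselves in the real-time order of $H$, and only operation-level real-time precedences (plus the lock-induced orderings) are part of $\locko_H$. So the precedences you invoke for overlapping transactions need not belong to $\locko_H$ at all.

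Concretely, suppose the last \op{} of $T_i$, namely $\tryc_i$, overlaps the first \op{} of $T_j$ (its \begtrans), and $\tryc_i$ acquires the lock on $\livel$ first. Then $T_i$ and $T_j$ overlap in $H$, yet clause (4) of the definition of $\locko_H$ orders $\tryc_i$ before $T_j$'s \begtrans, so \emph{every} linearization respecting $\locko_H$ --- in particular both $\alpha$ and $\beta$ --- places all of $T_i$ before all of $T_j$. Hence $T_i \rightarrow T_j$ is an \rt{} edge of both $\opg{H^{\alpha}}{\ll}$ and $\opg{H^{\beta}}{\ll}$ even though $T_i \not\prec^{RT}_{H} T_j$: your conclusion that ``neither real-time edge can appear in either graph'' is wrong, and $\prec^{RT}_{H^{\alpha}}$ can strictly contain $\prec^{RT}_{H}$. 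What actually rescues the argument --- and is the key assertion in the paper's proof --- is that the transaction-level real-time order of a linearization is determined by $\locko_H$ alone, not by the choice of linearization: the first \op{} of every transaction is a \begtrans{} and the last \op{} of every complete transaction is a \tryc{} (which locks $\livel$ in \remid), and clauses (1), (4) and (5) of $\locko_H$ order every such pair of \op{s}. Therefore $T_i \prec^{RT}_{H^{\alpha}} T_j$ iff $\tryc_i$ precedes $T_j$'s \begtrans{} in $\locko_H$, iff $T_i \prec^{RT}_{H^{\beta}} T_j$; that is, $\prec^{RT}_{H^{\alpha}} = \prec^{RT}_{\locko_H} = \prec^{RT}_{H^{\beta}}$, which may be a strict superset of $\prec^{RT}_{H}$. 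With that corrected statement, $\rt(H^{\alpha}) = \rt(H^{\beta})$, the two graphs do coincide for every $\ll$, and your reduction via \thmref{opg} goes through.
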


\begin{proof} From \lemref{histvalid}, we get that both $H^{\alpha}$ and $H^{\beta}$ are \valid{} histories. Now let us consider each case \\
\textbf{If:}  Assume that $H^{\alpha}$ is opaque. Then, we get that there exists a legal \tseq{} history $S$ that is equivalent to $\overline{H^{\alpha}}$. From the definition of $H^{\beta}$, we get that $\overline{H^{\alpha}}$ is equivalent to $\overline{H^{\beta}}$. Hence, $S$ is equivalent to $\overline{H^{\beta}}$ as well. We also have that, $\prec_{H^{\alpha}}^{RT} \subseteq \prec_{S}^{RT}$. From the definition of $\locko_H$, we get that $\prec_{H^{\alpha}}^{RT} = \prec_{\locko_H}^{RT} = \prec_{H^{\beta}}^{RT}$. This automatically implies that $\prec_{H^{\beta}}^{RT} \subseteq \prec_{S}^{RT}$. Thus $H^{\beta}$ is opaque as well. 

~ \\
\textbf{Only if:} This proof comes from symmetry since $H^{\alpha}$ and $H^{\beta}$ are not distinguishable. 
\end{proof}

\cmnt{
We consider the \memop{s} one by one. We order two overlapping \begtrans{} \op{s} by based on the order in which they obtain lock over \tcntr. A \begtrans{} \op{} and any other \memop{} can be ordered in any way compatible with the rest of the \op{s}. A read \op{} $r_i(x, v)$, and the \tryc{} \op{} of a transaction $T_j$ writing to $x$, $w_j(x, u)$ are ordered by their access on $x$. Similarly, two \tryc{} \op{s} of transaction $T_i$, $T_j$ accessing common \tobj{s} are ordered by their access to the common \tobj{s}. It must be noted that the transactions access all the \tobj{s} in a given order. Hence, deadlocks are not possible. 

Using this ordering, we get the following lemma,

\begin{lemma}
\label{lem:histseq}
The overlapping \op{s} of the history generated by \mvto{} algorithm can be ordered so that it is sequential. 
\end{lemma}
}

This lemma shows that, given a history $H$, it is enough to consider one sequential history $H^{\alpha}$ that respects $\locko_H$ for proving correctness. If this history is opaque, then any other sequential history that respects $\locko_H$ is also opaque.

\cmnt {
\vspace{1mm}
\noindent
\textit{Notations:}
}
Consider a history $H$ generated by \mvto{} algorithm. We then generate a sequential history that respects $\locko_H$. For simplicity, we denote the resulting sequential history as $H_{to}$. Let $T_i$ be a committed transaction in $H_{to}$ that writes to $x$ (i.e. it creates a new version of $x$). 

To prove the correctness, we now introduce some more notations. We define $\stl{T_i}{x}{H_{to}}$ as a committed transaction $T_j$ such that $T_j$ has the smallest timestamp greater than $T_i$ in $H_{to}$ that writes to $x$ in $H_{to}$. Similarly, we define $\lts{T_i}{x}{H_{to}}$ as a committed transaction $T_k$ such that $T_k$ has the largest timestamp smaller than $T_i$ that writes to $x$ in $H_{to}$. \cmnt{ We denote $\vli{ts}{x}{H_{to}}$, as the $v\_tuple$ in $x.vl$ after executing all the events in $H_{to}$ created by transaction $T_{ts}$. If no such $v\_tuple$ exists then it is nil. }Using these notations, we describe the following properties and lemmas on $H_{to}$,

\begin{property}
\label{prop:uniq-ts}
Every transaction $T_i$ is assigned an unique numeric timestamp $i$.
\end{property}

\begin{property}
\label{prop:ts-inc}
If a transaction $T_i$ begins after another transaction $T_j$ then $j < i$.
\end{property}

\begin{property}
\label{prop:readsfrom}
If a transaction $T_k$ reads $x$ from (a committed transaction) $T_j$ then $T_j$ is a committed transaction with $j$ being the largest timestamp smaller than $k$. Formally, $T_j = \lts{x}{T_k}{H_{to}}$. 
\end{property}

\cmnt{
\begin{proof}
This comes directly from the execution of read function. 
\end{proof}

\begin{property}
\label{prop:readswrite}
Suppose a transaction $T_k$ reads $x$ from (a committed transaction) $T_j$. Let $T_i$ be a committed transaction that writes to $x$, i.e. $w_i(x, u) \in \evts{T_i}$. Then, the timestamp of $T_i$ is either less than $T_j$'s timestamp or greater than $T_k$'s timestamp, i.e. $i<j \vee k>i$. 
\end{property}
}

\begin{lemma}
\label{lem:readswrite}
Suppose a transaction $T_k$ reads $x$ from (a committed transaction) $T_j$ in $H_{to}$, i.e. $\{w_j(x, v), r_k(x, v)\} \in \evts{H_{to}}$. Let $T_i$ be a committed transaction that writes to $x$, i.e. $w_i(x, u) \in \evts{T_i}$. Then, the timestamp of $T_i$ is either less than $T_j$'s timestamp or greater than $T_k$'s timestamp, i.e. $i<j \oplus k<i$ (where $\oplus$ is xor operator). 
\end{lemma}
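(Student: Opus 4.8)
The plan is to establish the exclusive-or by proving two things separately: that \emph{at most} one of $i<j$ and $k<i$ can hold, and that \emph{at least} one of them must hold. The first is immediate. Since $T_k$ reads $x$ from the committed transaction $T_j$, \propref{readsfrom} gives $j<k$. If both $i<j$ and $k<i$ held, we would obtain $k<i<j$, hence $k<j$, contradicting $j<k$. So the two disjuncts are mutually exclusive, and it remains only to show one of them holds.

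For the ``at least one'' part I would argue by contradiction, assuming neither disjunct holds, i.e. $j\le i\le k$. Because the values written are unique, $u\neq v$ forces $T_i\neq T_j$, and as $T_i$ is a writer distinct from the reader $T_k$ (this is exactly the configuration of an \mv{}-edge) we also have $i\neq k$; together with \propref{uniq-ts} this upgrades the bound to the strict inequality $j<i<k$. The contradiction will come from the fact that $T_i$ is committed, via the commit rule implemented by \checkv{} and the lock discipline captured by $\locko_H$. The key structural observation, read off from the \tryc{} pseudocode, is that $x$ is locked continuously from the moment $\checkv(i,x)$ is invoked until after the version tuple $\langle i,u,nil\rangle$ has been inserted into $x.vl$. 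Hence in $H_{to}$ the read event $r_k(x,v)$ — which locks $x$ to run \find{} and then appends $k$ to $T_j$'s read list — is ordered by $\locko_H$ either entirely before $\tryc_i$ acquires the lock on $x$, or entirely after $\tryc_i$ releases it; no interleaving is possible.

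I would then split on these two cases. If $r_k(x,v)$ precedes $\tryc_i$'s acquisition of $x$, then when $\checkv(i,x)$ runs, $T_j$'s version tuple $\langle j,v,rl\rangle$ is present (it was there for $T_k$ to read from) and $k\in rl$; since $j<i<k$, \checkv{} returns false and $T_i$ is forced to abort, contradicting that $T_i$ commits. If instead $\tryc_i$ completes before $r_k(x,v)$ acquires $x$, then $T_i$'s version with timestamp $i$ is already in $x.vl$ when $r_k$ calls $\find(k,x)$; as $j<i<k$, \find{} returns a tuple whose timestamp is at least $i>j$, so $r_k$ cannot read the value $v$ written by $T_j$, contradicting the hypothesis that $T_k$ reads $x$ from $T_j$. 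Either ordering yields a contradiction, so $j<i<k$ is impossible and at least one of $i<j$, $k<i$ holds; combined with mutual exclusivity this is the claimed exclusive-or.

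The main obstacle I anticipate is justifying the clean two-case split on event order. I must argue carefully from the pseudocode that $\tryc_i$ holds the lock on $x$ without release across both the check of $x$ and the insertion of its version, so that $\locko_H$ genuinely totally orders $r_k(x)$ with respect to $\tryc_i$'s entire access to $x$. The boundary cases $i=j$ and $i=k$ are the other delicate point: they require the explicit appeal to unique values and to the distinctness inherent in the \mv{}-edge configuration, since without ruling them out one cannot recover the strict inequality $j<i<k$ from $j\le i\le k$.
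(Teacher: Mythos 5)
Your proof is correct and follows essentially the same route as the paper's: the paper also argues by contradiction that $j<i<k$ would force, via the implementation of \emph{read} and \tryc{}, either $T_i$ to abort (your first case, through \checkv{}) or $T_k$ to read $x$ from a version no older than $T_i$'s (your second case, through \find{}), contradicting the hypotheses. The paper states this in two sentences without the lock-ordering case split, the mutual-exclusivity step, or the boundary cases $i=j$, $i=k$; your write-up simply makes explicit the details the paper leaves implicit.
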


\begin{proof}
We will prove this by contradiction. Assume that $i<j \oplus k<i$ is not true. This implies that, $j<i<k$. But from the implementation of read and \tryc{} functions, we get that either transaction $T_i$ is aborted or $T_k$ reads $x$ from $T_i$ in $H$. Since neither of them are true, we get that $j<i<k$ is not possible. Hence, $i<j \oplus k<i$. 
\end{proof}

To show that $H_{to}$ satisfies opacity, we use the graph characterization developed in \secref{gckt}. For the graph characterization, we use the version order defined using timestamps. Consider two committed transactions $T_i, T_j$ such that $i < j$. Suppose both the transactions write to \tobj{} $x$. Then the versions created are ordered as: $x_i \ll x_j$. We denote this version order on all the \tobj{s} created as $\ll_{to}$. Now consider the opacity graph of $H_{to}$ with version order as defined by $\ll_{to}$, $G_{to} = \opg{H_{to}}{\ll_{to}}$. In the following lemmas, we will prove that $G_{to}$ is acyclic.

\begin{lemma}
\label{lem:edgeorder}
All the edges in $G_{to} = \opg{H_{to}}{\ll_{to}}$ are in timestamp order, i.e. if there is an edge from $T_j$ to $T_i$ then the $j < i$. 
\end{lemma}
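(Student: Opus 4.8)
The plan is to case‑split on the three kinds of edges in $G_{to}=\opg{H_{to}}{\ll_{to}}$ and verify, for each kind, that any edge from $T_j$ to $T_i$ forces $j<i$. The three edge types are \rt{}, \rf{}, and \mv{} edges, so I would handle them one at a time, mirroring the structure of the proof of \lemref{seracycle} but now invoking the algorithm‑specific facts (\propref{ts-inc}, \propref{readsfrom}, \lemref{readswrite}) rather than the legality of a t‑sequential history.

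First I would dispatch the easy cases. For an \rt{} edge from $T_j$ to $T_i$, transaction $T_j$ commits before $T_i$ starts, so $T_i$ begins after $T_j$; by \propref{ts-inc} this gives $j<i$. For an \rf{} edge from $T_j$ to $T_i$, transaction $T_i$ reads $x$ from $T_j$, and \propref{readsfrom} tells us that $j$ is the largest timestamp smaller than $i$, so in particular $j<i$.

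The \mv{} edges are where the real work lies, and I expect this to be the main obstacle. Recall the construction: for a read $r_k(x,v)$ reading from $w_j(x,v)$ and a committed transaction $T_i$ writing $u\neq v$ to $x$, the edge is either from $T_i$ to $T_j$ (when $x_i\ll x_j$) or from $T_k$ to $T_i$ (when $x_j\ll x_i$). Under the timestamp version order $\ll_{to}$, the condition $x_i\ll_{to} x_j$ is exactly $i<j$, and $x_j\ll_{to} x_i$ is exactly $j<i$. So in the first subcase the edge goes $T_i\to T_j$ with $i<j$, immediately in timestamp order. In the second subcase the edge goes $T_k\to T_i$ with $j<i$; here I must show the edge respects timestamp order, i.e. $k<i$. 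This is precisely where \lemref{readswrite} is needed: since $T_k$ reads $x$ from $T_j$ and $T_i$ writes $x$, we have $i<j \oplus k<i$, and having assumed $j<i$ (ruling out $i<j$), the exclusive‑or forces $k<i$. Hence the edge from $T_k$ to $T_i$ also goes from a smaller to a larger timestamp.

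Collecting the three cases, every edge of $G_{to}$ runs from a lower‑timestamp vertex to a higher‑timestamp one, which is the claim; I would then close with \qed. The only subtlety to state carefully is that $\ll_{to}$ translates the version‑order comparisons in the \mv{} edge definition directly into timestamp comparisons, so that the two \mv{} subcases align cleanly with \lemref{readswrite}.
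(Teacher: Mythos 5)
Your proof is correct and follows essentially the same route as the paper's: a case split over \rt{}, \rf{}, and \mv{} edges, using \propref{ts-inc} for real-time edges, \propref{readsfrom} for reads-from edges, and \lemref{readswrite} (via the exclusive-or) to handle the $T_k \to T_i$ subcase of the multiversion edges. Your write-up is in fact slightly cleaner than the paper's, which contains a typo in the \rt{} case (it says $T_i$ terminated before $T_j$ started) and leaves the exclusive-or step implicit.
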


\begin{proof}
To prove this, let us analyze the edges one by one, 
\begin{itemize}
\item \rt{} edges: If there is a \rt{} edge from $T_j$ to $T_i$, then $T_i$ terminated before $T_j$ started. This implies that $T_j$ started before $T_i$. Hence, from \propref{ts-inc} we get that $j < i$.

\item \rf{} edges: This follows directly from \propref{readsfrom}. 

\item \mv{} edges: The \mv{} edges relate a committed transaction $T_i$ writing to a \tobj{} $x$, $w_i(x,u)$; a successful read \op{} $r_k(x,v)$ belonging to a transaction $T_k$ reading $x$ written by a committed transaction $T_j$, $w_j(x, v)$ and transaction $T_j$. Transactions $T_j, T_i$ create new versions $x_i, x_j$ respectively. According to $\ll_{to}$, if $x_i \ll_{to} x_j$, then there is an edge from $T_i$ to $T_j$. From the definition of $\ll_{to}$ this automatically implies that $i < j$.

On the other hand, if $x_j \ll_{to} x_i$ then there is an edge from $T_k$ to $T_i$. Thus in this case, we get that $j < i$. Combining this with \lemref{readswrite}, we get that $k < i$. 
\end{itemize}
Thus in all the cases we have shown that if there is an edge from $T_j$ to $T_i$ then the $j < i$. 
\end{proof}

\begin{theorem}
\label{thm:to-opaque}
The history $H_{to}$ is opaque. 
\end{theorem}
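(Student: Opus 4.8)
My plan is to assemble the pieces that have already been set up, so that the theorem follows almost immediately from \lemref{edgeorder} and \thmref{opg}. The overall strategy is to invoke the graph characterization of opacity: by \thmref{opg}, it suffices to show that $H_{to}$ is \valid{} and that there exists a version order making its opacity graph acyclic. The natural candidate for the version order is $\ll_{to}$, which is exactly the one used to define $G_{to} = \opg{H_{to}}{\ll_{to}}$.

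First I would note that $H_{to}$ is \valid. This is not something to re-derive from scratch: $H_{to}$ is, by construction, a linearization of $H$ that respects $\locko_H$, so \lemref{histvalid} applies directly and gives \validity. (\lemref{histseq} additionally assures us that the particular choice of linearization is immaterial for opacity, so fixing $H_{to}$ loses no generality.)

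Next I would establish that $G_{to}$ is acyclic, which is the crux of the argument. Here the work has essentially been done in \lemref{edgeorder}, which shows that every edge of $G_{to}$ — whether an \rt, \rf, or \mv{} edge — goes from a transaction with a strictly smaller timestamp to one with a strictly larger timestamp. Suppose, for contradiction, that $G_{to}$ contained a cycle $T_{i_1} \to T_{i_2} \to \cdots \to T_{i_m} \to T_{i_1}$. Applying \lemref{edgeorder} along each edge yields $i_1 < i_2 < \cdots < i_m < i_1$, and hence $i_1 < i_1$, which is impossible since (by \propref{uniq-ts}) the timestamps are well-defined numeric values. Therefore no cycle exists and $G_{to}$ is acyclic.

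Finally I would close by applying \thmref{opg}: since $H_{to}$ is \valid{} and there is a version order $\ll_{to}$ for which $\opg{H_{to}}{\ll_{to}}$ is acyclic, the theorem's ``iff'' gives that $H_{to}$ is opaque. I do not expect a genuine obstacle in this proof — the real content lives in \lemref{edgeorder} (the careful case analysis of the three edge types, especially the \mv{} edges relying on \lemref{readswrite}) and in \thmref{opg}; the present theorem is the short final step that combines \validity{} with the timestamp-monotonicity of edges to rule out cycles. The only point requiring mild care is making the cycle-to-contradiction argument explicit, i.e. observing that strict monotonicity of timestamps along every edge is incompatible with returning to the starting vertex.
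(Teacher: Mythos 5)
Your proposal is correct and follows essentially the same route as the paper's own proof: establish \validity{} via \lemref{histvalid}, derive acyclicity of $G_{to}$ from the timestamp-monotonicity of edges in \lemref{edgeorder} by the same cycle-to-contradiction argument, and conclude opacity via \thmref{opg}.
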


\begin{proof}
From the definition of $H_{to}$ and \lemref{histvalid}, we get that $H_{to}$ is \valid. We show that $G_{to} = \opg{H_{to}}{\ll_{to}}$ is acyclic. We prove this by contradiction. Assume that $G_{to}$ contains a cycle of the form, $T_{c1} \rightarrow T_{c2} \rightarrow .. T_{cm} \rightarrow T_{c1}$. From \lemref{edgeorder} we get that, $c1 < c2 < ... < cm < c1$ which implies that $c1 < c1$. Hence, a contradiction. This implies that $G_{to}$ is acyclic. Thus from \thmref{opg} we get that $H_{to}$ is opaque. 
\end{proof}

Now, it is left to show that our algorithm is \emph{live}, i.e., under certain conditions, every operation eventually completes. We have to show that the transactions do not deadlock. The is so because all the transactions lock all the \tobj{s} in a predefined order. As discussed earlier, the STM system the orders all \tobj{s}. We denote this order as \aco and denote it as $\prec_{ao}$. Thus $x_1 \prec_{ao} x_2 \prec_{ao} ... \prec_{ao} x_n$. In addition to \tobj{s}, the transactions also access the shared variable $\livel$. Thus we add $\livel$ to this order: $x_n \prec_{ao} \livel$. We refer to the combined set of \tobj{s} $x_1, .... x_n$ and $\livel$ as shared objects. 

From \aco, we get the following property

\begin{property}
\label{prop:aco}
Suppose transaction $T_i$ accesses shared objects $p$ and $q$ in $H$. Iff $p$ is ordered before $q$ in \aco, then $lock(p)$ by transanction $T_i$ occurs before $lock(q)$. Formally, $(p \prec_{ao} q) \Leftrightarrow (lock(p) <_H lock(q))$. 
\end{property}

\begin{theorem}
\label{thm:live}
Assuming that no transaction fails and all the locks are starvation-free, every operation of \mvto{} algorithm eventually returns. 
\end{theorem}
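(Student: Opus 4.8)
The plan is to reduce liveness to deadlock-freedom. First I would observe that, apart from blocking on lock acquisitions, every operation of the \mvto{} algorithm does only a bounded amount of local work: the loops inside \find{} and \checkv{} range over the finite version list of a single \tobj, the loop in \tryc{} ranges over the finite write set $T_i.WS$, and \begtrans{}, \remid{} and $write_i$ do constant work. Hence the only way an operation can fail to return is by waiting forever on some lock, and it suffices to show that every lock request issued by the algorithm is eventually granted.

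Next I would establish deadlock-freedom from the ordered-locking discipline captured by \propref{aco}. The relevant structural facts are: (i) a $read_i$, \begtrans{}, or \remid{} operation holds at most one lock at a time and never requests a second lock while holding one, so it always releases within bounded work; and (ii) the only operation that holds several locks simultaneously is \tryc{}, which by \propref{aco} acquires them strictly in $\prec_{ao}$ order (the \tobj{s} of its write set in the predefined order, followed, if it commits, by \livel). I would then argue by contradiction: suppose some request, say for object $p$, is never granted. Since the locks are starvation-free, $p$ must be held forever by a single transaction after some point; by (i) that holder must be executing \tryc{}, and since it never releases $p$ it never returns, so (no transaction having failed) it is itself blocked forever on a later request. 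This defines a map sending each permanently blocked transaction $T$, waiting on object $w(T)$, to the transaction $T'$ that holds $w(T)$ forever; $T'$ is also permanently blocked, waiting on some $w(T')$, and because $T'$ already holds $w(T)$ while requesting $w(T')$, \propref{aco} gives $w(T) \prec_{ao} w(T')$.

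Because at most $n$ transactions are live at any time, following this map repeatedly must eventually revisit a transaction, producing a cycle $w(T_{1}) \prec_{ao} w(T_{2}) \prec_{ao} \cdots \prec_{ao} w(T_{1})$. This contradicts $\prec_{ao}$ being a strict total order on the shared objects $x_1, \ldots, x_n, \livel$. Hence no request is blocked forever, so every lock is eventually granted; combined with the bounded local work of each operation, this shows that every operation of the \mvto{} algorithm eventually returns.

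I expect the main obstacle to be the step that turns a permanently blocked request into a permanently blocked \tryc{} holder, since this is where both hypotheses are genuinely used: the ``no transaction fails'' assumption rules out a holder that crashes while holding a lock, and the starvation-freedom of the locks rules out a holder that releases and re-acquires $p$ indefinitely while starving the waiter. Once the map $T \mapsto T'$ is in place, the $\prec_{ao}$-monotonicity together with finiteness of the live set yields the cycle and the contradiction routinely.
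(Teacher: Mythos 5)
Your proposal is correct and takes essentially the same route as the paper: both reduce liveness to deadlock-freedom of the ordered-locking discipline and apply \propref{aco} to the shared objects $x_1,\ldots,x_n,\livel$. The only cosmetic difference is that the paper derives the contradiction by picking the highest-ranked locked object in the deadlocked set (whose holder must then wait on a still higher-ranked object), whereas you follow the waits-for chain among the finitely many blocked transactions to produce a $\prec_{ao}$-increasing cycle; your write-up is in fact more explicit about where the no-failure and starvation-freedom hypotheses enter.
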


\begin{proof}
In our algorithm, a transaction $T_k$ executing some operation will not return only if the \op{} or a sub-function that is invoked by the \op{} is stuck waiting on a lock. This is possible only when a set of transactions, denoted as $D$ (which includes $T_k$) are deadlocked. Let $SO$ be the set of all shared objects locked by transactions in the $D$. Let $s$ be a shared object in the set $SO$ that is ranked highest according to \aco{} and locked by a transaction in $T_i$ in $D$. Since $T_i$ is deadlocked, it must be waiting to access a shared object, say $s' \in SO$ locked by a transaction $T_j$ (otherwise $T_i$ cannot be involved in the deadlocked). From \propref{aco}, we get that $s \prec_{ao} s'$. But this contradicts our choice of $s$. Hence, a deadlock is not possible. 
\end{proof}

Finally, we prove that  our algorithm satisfies mv-permissive.

\begin{theorem}
\label{thm:mvp}
\mvto{} algorithm is mv-permissive. 
\end{theorem}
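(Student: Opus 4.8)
The plan is to verify the two defining clauses of mv-permissiveness separately: (i) that no read-only transaction is ever aborted, and (ii) that whenever an update transaction is aborted, it conflicts with another update transaction. Recall that in this algorithm the only operation that can forcefully abort is \tryc{}; the read operation always returns a value and the write operation always returns $ok$, so every abort must originate inside \tryc{}.

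For clause (i) I would appeal directly to the structure of \tryc{} (\algoref{tryc}). Its very first test checks whether $T_i.WS$ is empty; if so, the operation removes $i$ from \livel{} and returns $ok$ without ever invoking \checkv. Hence a transaction with an empty write set --- i.e.\ a read-only transaction --- always commits, and the first requirement follows immediately.

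For clause (ii), suppose an update transaction $T_i$ aborts. By inspection of \tryc{}, this can happen only because $\checkv(i,x)$ returns $false$ for some \tobj{} $x \in \Wset(T_i)$. Unfolding \checkv{} (\algoref{checkVersion}), this means there is a version tuple $\langle j,v,rl\rangle$ in $x.vl$ together with some $T_k \in rl$ satisfying $j < i < k$. The crux of the argument is to identify the conflicting update transaction correctly. The tempting candidate is the reader $T_k$ that ``triggered'' the abort, but $T_k$ need not be an update transaction, so I would instead use $T_j$: a version tuple carrying timestamp $j$ is inserted into $x.vl$ only upon the commit of $T_j$ (\lineref{ins-vtuple} of \tryc{}), so $T_j$ is a committed transaction that writes $x$, hence an update transaction, and $j < i$ forces $T_j \neq T_i$. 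Since $x \in \Wset(T_i)$ as well, $T_i$ and $T_j$ are distinct update transactions that both write the common \tobj{} $x$, i.e.\ they conflict. Thus every abort of an update transaction is caused by a write--write conflict with another update transaction.

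Combining (i) and (ii) establishes that \mvto{} is mv-permissive. The one genuinely subtle point is the one flagged above --- blaming the abort on $T_k$ is incorrect because $T_k$ may be read-only, so the conflict must be pinned on the version-creator $T_j$, whose status as an update transaction is guaranteed by the fact that it wrote a version of $x$. I would also deliberately avoid asserting that the conflicting transaction is \emph{concurrent} with $T_i$, since the definition of mv-permissiveness used here demands only a conflict with another update transaction, and indeed $T_j$ may well have committed before $T_i$ began.
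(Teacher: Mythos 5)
Your proposal is correct and follows essentially the same route as the paper's (much terser) proof: read-only transactions pass the empty write-set check in \tryc{} and thus always commit, and an update transaction aborts only because some committed transaction $T_j$ has already created a version of a \tobj{} that $T_i$ also writes, i.e.\ a write--write conflict with another update transaction. Your added care in pinning the conflict on the version-creator $T_j$ rather than the possibly read-only reader $T_k$ is exactly the point the paper glosses over with the phrase ``another update transaction has already committed a previous version,'' so the two arguments coincide in substance.
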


\begin{proof}
According to the algorithm a read \op{} never returns abort. Hence, when a read-only transaction executes \tryc{} \op, it always $ok$. Thus, a read-only transaction never aborts.Further, an update transaction aborts only if another update transaction has already committed a previous version. This shows that \mvto{} algorithm is mv-permissive.
\end{proof}

\section{Garbage Collection}
\label{sec:gar}
As one can see with multi-version STMs, multiple versions are created. But storing multiple versions can unnecessarily waste memory. Hence, it is important to perform garbage collection by deleting unwanted versions of \tobj{s}. Some of the earlier STM systems solve this problem by maintaining a fixed number of versions for each \tobj~\cite{CachRito:2006:CP}. We on the other hand, do not restrict the number of versions. The STM system will detect versions that will never again be used (i.e. have become garbage) and delete them. The garbage collection routine will be invoked whenever the number of versions of a \tobj{} has become greater than a predefined threshold. The threshold can be decided dynamically by the application invoking the STM system based on the current memory requirements. 

The STM system will delete a version of a \tobj{} $x$ created by transaction $T_i$ when the following conditions are satisfied:

\begin{enumerate}
\item At least one another version of $x$ has been created by $T_k$ and $i < k$;
\item Any transaction $T_j$ such that $T_i < T_j < T_k$ has terminated (either committed or aborted).
\end{enumerate}

To capture these conditions, we modify the data structure maintained. For each \tobj{} $x$, we augment the version tuples stored in its version list by adding another entry $nts$. Thus each $v\_tuple$ is: $\langle ts,v,rl,nts \rangle$. The entry $nts$ (next timestamp) denotes a committed transaction with the smallest timestamp larger than $ts$ that has created a version of $x$. If there is no such transaction then $nts$ is $nil$. With this modification, we make the following changes to functions discussed in the previous section:

\vspace{1mm}
\noindent
\textit{$\init()$:} In the version tuple created for $T_0$ on every \tobj{} $x$, the $nts$ entry is also initialized to $nil$. Thus, the \lineref{t0-init} is replaced with: add $\langle 0, 0, nil, nil \rangle$ to $x.vl$;

\vspace{1mm}
\noindent
\textit{$read_i(x), \find(i,x):$} The changes in both these functions are trivial. Whenever the version tuple is referred to, the $nts$ entry is also assumed to be present. There field is not directly used in these functions. 

\vspace{1mm}
\noindent
\textit{$\tryc_i():$} We replace \lineref{ins-vtuple} of \tryc{} with the function $\instup(x, i, v, nil)$. The description of this function is given below.

\vspace{1mm}
\noindent
\textit{$\instup(x, i, v, nil):$} This function inserts the new value written by transaction $i$ into $x$'s version list. First it creates a version tuple, $cur\_tuple$, with timestamp $i$ and value $v$. It then identifies the previous version tuple created by a transaction that has the largest id (timestamp) smaller than $i$ (\lineref{prevtup}). Then it copies the $nts$ entry of the $prev\_tuple$ into $cur\_tuple$. The $nts$ entry of $prev\_tuple$ is stored as $i$. This way, the $nts$ entry for every version tuple is kept updated. Then garbage collection routine is invoked if the total number of $v\_tuples$ of $x$ is greater than a predefined threshold. 

Having described the changes necessary to keep $nts$ entry updated, we will next describe the steps to perform garbage collection, $\gc()$. 

\vspace{1mm}
\noindent
\textit{$\gc(x):$} On being invoked, this function locks $\livel$. It checks the version lists of the currently considered \tobj{} $x$. For each tuple in $x$'s version list, denoted as $cur\_tuple$, first $cur\_tuple.nts$ is checked. If it $nil$, then the next tuple is checked. Otherwise, for each transaction with timestamp $j$ in the range $cur\_tuple.ts$ to $cur\_tuple.nts$ is checked. If some transaction $T_j$ is in $\livel$, then the algorithm decides that $cur\_tuple$ is not yet garbage. It then checks the next tuple is checked. If no such transaction $T_j$ in this range is in $\livel$, then the algorithm decides that $cur\_tuple$ has become garbage. Hence, it deletes this tuple. Then, it returns the control. But before returning this function does not unlock $\livel$ as $\livel$ will again be locked in \tryc. 


Figure 4 illustrates the idea of garbage collection. Here, the version tuple created by transaction $T_5$ has already been deleted. The version tuple created by transaction $T_{10}$ will be deleted when all the transactions between 10 and 17 have terminated (either aborted or committed).

\begin{algorithm}  
\label{alg:instuple}
\caption{$\instup(x, i, v, nil)$: Inserts the version tuple for $(i,v)$ created by the transaction $T_i$ into the version list of $x$}
\begin{algorithmic}[1]
\State \Comment{Initialize $cur\_tuple$}
\State $cur\_tuple = \langle i,v,nil,nil \rangle$; 
\State /* Finds the tuple with the largest timestamp smaller than i */
\State $prev\_tuple = \find(i, x)$; \label{lin:prevtup}
\State \Comment{$prev\_tuple$ is of the form $\langle ts, v', rl, nts\rangle$}
\State $cur\_tuple.nts = prev\_tuple.nts$;
\State $prev\_tuple.nts = i$;
\State insert $cur\_tuple$ into $x.vl$ in the increasing order of timestamps;  
\State /* $\lvert x.vl \lvert$ denotes number of versions of $x$ created and threshold is a predefined value. */
\If {($\lvert x.vl \lvert > threshold$)} 
\State $gc(x)$;
\EndIf 
\end{algorithmic}
\end{algorithm}

\begin{algorithm}
\caption{STM $\gc(x)$: Unused version of a \tobj{} $x$ will deleted from $x.vl$}
\begin{algorithmic}[1]
\State lock $\livel$;
\State \Comment{\tobj{} $x$ is already locked}
\ForAll {$(cur\_tuple \in x.vl)$} 
\If {$(cur\_tuple.nts == nil)$}
\State /* If $nts$ is nil, check the next tuple in the version list */
\State continue; 
\EndIf
\State $j = cur\_tuple.ts + 1$;
\State \Comment{Check for all ids $j$ in the range $j < nts$}
\While {$(j < cur\_tuple.nts)$} 
\If {$(j \in \livel)$}
\State break;
\EndIf
\EndWhile
\State /* If all the tuples with timestamp $j$, such that  $i< j < nts$ have terminated then $cur\_tuple$ can be deleted*/
\State delete $cur\_tuple$;
\EndFor 
\State /* $\livel$ is not unlocked when this function returns */
\end{algorithmic}
\end{algorithm}
\cmnt {
\vspace{1mm}
\noindent
\textit{$garb\_collect()$:} In the version list, each tuple is in the form of $(i,v,r_l)$. So if we want to delete the version $i$, we have to check it whether its next version has been created or not according to the MVTO protocol. If the next version has not been created then the version $i$ can not be deleted.

Suppose the next version of $(i,v,r_l)$ is $(i',v',r_l')$ and we want to delete $(i,v,r_l)$ version. According to this protocol, it will check the TS of all the $j$ transactions between $i$ and $i'$. If $(j \ni r_l) \wedge (j \in live\_trans\_list()$, then it will go to the next $j+1$ transaction and repeat the same process. When the process complete, the STM system automatically delete the version $(i,v,r_l)$.

In $\tryc$ function, before line no 2, 8 and 15, the id of the transaction from $live\_trans\_list()$ will be removed. In line no 15 garbage collection function is invoked.
}

\begin{figure}
\label{fig:gcds}
\centering
\includegraphics[scale=0.5]{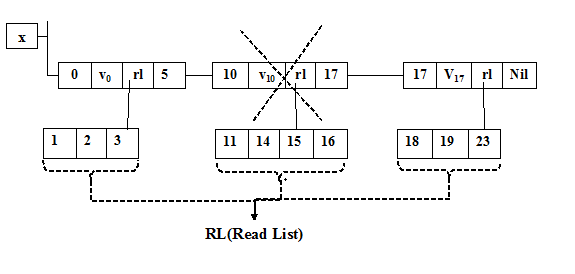}
\caption{Data Structures for Garbage Collection}
\label{fig:3}
\end{figure}

\subsection{Proof of Garbage Collection}
\label{subsec:gcproof}

Consider a history $H$ generated by the MVTO algorithm with garbage collection. As discussed \secref{mvto}, $H$ is not sequential. To prove the correctness, we order the overlapping \op{s} to obtain a sequential history. Similar to \secref{mvto} we use a total order that respects $\locko_H$ to order the overlapping \op{s}. Although the \tryc{} function is modified due to invocation of garbage collection functions, \instup{} and \gc, it does not modify the \locko. 

Thus \lemref{histvalid} and \lemref{histseq} is applicable to $H$. Hence, we consider any total order that respects $\locko_H$ for ordering the overlapping \op{s} of $H$. We denote the resulting sequential history as $H_{gc}$.

To prove of our garbage collection scheme, we now introduce some more notations. \cmnt{Let $T_i$ be a committed transaction in $H_{gc}$ that writes to $x$ (i.e. it creates a new version of $x$). Then, we define $\stl{x}{T_i}{H_{gc}}$ as a committed transaction $T_j$ such that $T_j$ has the smallest timestamp greater than $T_i$ in $H_{gc}$ that writes to $x$ in $H_{gc}$. Similarly, we define $\lts{x}{T_i}{H_{gc}}$ as a committed transaction $T_k$ such that $T_k$ has the largest timestamp smaller than $T_i$ that writes to $x$ in $H_{gc}$. }We denote $\vli{ts}{x}{H_{gc}}$, as the $v\_tuple$ in $x.vl$ created by transaction $T_{ts}$ in $H_{gc}$. If no such $v\_tuple$ exists then it is nil. We have the following useful lemmas on garbage collection. In these lemmas, we use the notations defined in \subsecref{proof}.

\begin{lemma}
\label{lem:vlreal}
Consider any history $H_{gc}$ generated by the algorithm with garbage collection. Let $H_p$ be a prefix of $H_{gc}$. For every live transaction $T_l$ in $incomplete(H_p)$ and for every \tobj{} $x$, we have: (a) Let $T_j = \lts{T_l}{x}{H_p}$. Then $\vli{T_j}{x}{H_p} = vt_j \neq nil$. (b) Let $vt_j.nts = T_k$. Then, $T_k = \stl{T_l}{x}{H_p}$
\end{lemma}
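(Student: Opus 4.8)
The plan is to proceed by induction on the number of events in the prefix $H_p$, strengthening the inductive hypothesis with a global invariant on the $nts$ fields. Concretely, I would maintain the invariant (call it (I)) that at the end of every prefix, for every \tobj{} $x$ and every version tuple $vt_a$ currently present in $x.vl$ (created by a committed writer $T_a$), the field $vt_a.nts$ names exactly the committed transaction with the smallest timestamp greater than $a$ that writes to $x$, and $nil$ if $T_a$ is the latest such writer. The point of (I) is that $nts$ is assigned only inside \instup{} (and \init), never touched by \gc; and since ``being a committed writer of $x$'' is a permanent property, deleting tuples during garbage collection cannot falsify (I). Once (I) is available, parts (a) and (b) reduce to two observations about $T_j = \lts{T_l}{x}{H_p}$: because $j$ is the largest committed-writer timestamp below $l$ and the live $T_l$ is not committed, the committed writers of $x$ above $j$ coincide with those above $l$, so $\stl{T_j}{x}{H_p} = \stl{T_l}{x}{H_p}$, and this common value strictly exceeds $l$.

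From these I would extract the \emph{protection} property yielding (a): by (I) the tuple $vt_j$, if present, satisfies $vt_j.nts = \stl{T_l}{x}{H_p} > l > j$, so the live transaction $T_l$ lies strictly inside the interval $(j, vt_j.nts)$. The garbage collector deletes $vt_j$ only when every transaction with timestamp in that interval has terminated; since $T_l \in \livel$, the tuple $vt_j$ cannot be collected while $T_l$ is live, which gives $\vli{T_j}{x}{H_p} \neq nil$, i.e.\ (a). Part (b) is then immediate from (I) together with the identity $\stl{T_j}{x}{H_p} = \stl{T_l}{x}{H_p}$.

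For the inductive step I would examine the events that can change the live set or a version list. A \begtrans{} event introduces a transaction $T_l$ whose timestamp is the largest so far, so $\lts{T_l}{x}{H_p}$ is just the latest committed writer of $x$, whose tuple carries $nts = nil$ and is never collected, making (a),(b) trivial; nothing relevant changes for older live transactions. A read, a write, and the commit or abort of a read-only (or aborting) transaction alter no version list or $nts$ field and only shrink the live set, so (I),(a),(b) persist verbatim. The real work is the commit of an update transaction $T_i$, which runs \instup{} on each $x \in \Wset(T_i)$ and may trigger \gc{}. Here I must re-establish (I) after \instup{} sets $prev\_tuple.nts = i$ and $cur\_tuple.nts = prev\_tuple.nts$, and then re-run the protection argument with the updated $nts$ values to confirm the tuple each surviving live transaction needs is not collected.

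The crux, and the step I expect to be the main obstacle, is a sub-claim needed to re-establish (I): when $T_i$ commits, the tuple $prev\_tuple = \find(i,x)$ is created by the committed writer with the \emph{largest timestamp below $i$ among all committed writers}, not merely among those whose tuples are still present; equivalently, no committed writer of $x$ has a timestamp strictly between $prev\_tuple.ts$ and $i$. I would argue by contradiction: such a writer $T_r$ must have had its tuple collected earlier, which by the \gc{} condition requires every transaction with timestamp in $(r, T_r.nts)$ to have terminated at that earlier instant; but monotonicity of timestamps (\propref{ts-inc}) forces $T_i$ to have already begun and to be still live at that instant whenever $i \in (r, T_r.nts)$, contradicting the deletion, while otherwise $T_r.nts$ is itself a smaller committed writer lying in $(r,i)$, letting the argument recurse and produce an impossible strictly increasing infinite chain of committed writers inside the bounded interval $(prev\_tuple.ts, i)$. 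Granting this sub-claim, setting $prev\_tuple.nts = i$ is exactly correct and $cur\_tuple.nts$ inherits the old (still correct) successor, so (I) is preserved and the remaining checks are routine.
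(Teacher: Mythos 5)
Your proposal is correct, but it is organized quite differently from the paper's proof, and the comparison is instructive. The paper inducts on the number of version tuples ever created for the \tobj{} $x$ (these are the only events that touch $x.vl$ or its $nts$ fields), and its induction hypothesis is literally the statement of the lemma; the sub-claim you single out as the crux --- that $\find(i,x)$ returns the tuple of the largest committed writer below $i$ among \emph{all} committed writers, not merely among tuples that survived collection --- is obtained there essentially for free: the committing transaction is itself a live transaction in the preceding prefix, so part~(a) of the induction hypothesis applied to it already says that its \ltsn{} tuple is still present, and that tuple is exactly what $\find$ returns. You instead prove this sub-claim from scratch by contradiction (a deleted intermediate writer $T_r$ either has $i$ inside the protected interval $(r,\,T_r.nts)$, which timestamp monotonicity turns into a live transaction inside the interval at deletion time, or else $T_r.nts$ is a committed writer in $(r,i)$ and the descent terminates since transactions are finite); this is sound, but you could short-circuit it by applying your own inductive hypothesis to the committing transaction, exactly as the paper does. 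What your route buys in exchange is rigor where the paper is loose: your invariant (I) --- the $nts$ field of every \emph{present} tuple names the true successor among all committed writers --- strengthens part~(b) from lts-tuples of live transactions to all tuples, is trivially preserved by \gc{} (as you note, being a committed writer is permanent), and forces you to treat cases the paper silently skips, such as transactions that begin between two version creations of $x$ (your begin-event case: the latest committed writer's tuple always has $nts = nil$, hence is never collected) and \gc{} runs triggered at arbitrary commits. Two small points to tighten: in the protection step, record the case $\stl{T_l}{x}{H_p} = nil$ (then $vt_j.nts = nil$ and \gc{} skips the tuple outright), and make explicit that a \gc{} run \emph{before} $T_l$ begins cannot delete $vt_j$ --- by (I) at that earlier prefix a non-$nil$ value of $vt_j.nts$ names a committed writer with timestamp above $l$, and \propref{ts-inc} then forces $T_l$ to have already begun; this is the same monotonicity argument you already use for the crux, so the tools are all in your writeup.
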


\begin{proof}
We prove this using induction on number of version tuples, $count$, created for \tobj{} $x$. 

\vspace{1mm}
\noindent
\textit{Base case, $count=0$}: When the STM system is initialized, the first version is created by $T_0$. Consider a prefix of $H_{gc}$, denoted as $H_0$, which has only this version of $x$ created by $T_0$. Thus in $H_0$ no transaction has yet committed. Since no transaction has yet executed \tryc{} in $H_0$, the \gc{} function would not have been executed. So the version tuple created by $T_0$ would not have been deleted. Let $T_l$ be a live transaction in $H_0$. So we get that $T_0 = \lts{T_l}{x}{H_0}$, $\vli{T_0}{x}{H_0} = vt_0 \neq nil$. We also have that $vt_0.nts = nil$ and $\stl{T_l}{x}{H_{gc}} = nil$. So this proves the base case. 

\vspace{1mm}
\noindent
\textit{Induction case, $count=m+1$}: In this case, we have to prove the lemma after $m+1$ version tuples have been created assuming that it is true when $m$ tuples were present. Consider a prefix of $H_{gc}$, denoted as $H_m$, in which $m$ versions are created (note that the number of versions of $x$ present in $H_m$ could be less than $m$ since some tuples could have been deleted by \gc() function). Let $T_l$ be a live transaction in $H_m$ that executes \tryc{} to generate the next version of $x$. From induction hypothesis, we have that $T_j = \lts{T_l}{x}{H_m}$, $\vli{T_j}{x}{H_m} = vt_j \neq nil$ and $vt_j.nts = T_k = \stl{T_l}{x}{H_m}$.

Consider another prefix of $H_{gc}$, $H_{m+1}$ in which $T_l$ committed and created $m+1$st version of $x$. As observed earlier, the lemma is true in $H_m$. When $T_l$ commits, the only live transactions that are affected are those transactions whose timestamps are between $j$ and $k$. Thus if we prove that it is true for all these live transactions, then the lemma is true for all live transactions in $H_{m+1}$. Consider two live transactions $T_{l1}$ and $T_{l2}$ such that $j<l1<l<l2<k$. From the \tryc{} \op{} of $T_l$, we get that $T_j = \lts{T_{l1}}{x}{H_m} = \lts{T_{l1}}{x}{H_{m+1}}$ and $\vli{T_j}{x}{H_m} = \vli{T_j}{x}{H_{m+1}} = vt_j \neq nil$. We also have that $T_k = \stl{T_{l2}}{x}{H_m} = \stl{T_{l2}}{x}{H_{m+1}}$. 

Since $T_l$ is committed in $T_{m+1}$ a new tuple of $x$ is created. Thus we get that $\lts{T_{l2}}{x}{H_{m+1}} = T_l$ and $ \vli{T_l}{x}{H_{m+1}} = vt_l \neq nil$. This proves the induction case. 
\end{proof}

This lemma intuitively states that for any live transaction $T_l$, its \ltsn{} transaction for \tobj{} $x$, $T_j$ is not deleted by \gc{} function. It also states that for all version tuples, the $nts$ entry is correctly maintained. Using this lemma, we next prove that that \propref{readsfrom} and \lemref{readswrite} are true even with garbage collection.

\begin{lemma}
\label{lem:gc-read-rule}
The history $H_{gc}$ generated by MVTO with garbage collection satisfies read rule: If a transaction $T_k$ reads $x$ from (a committed transaction) $T_j$ in $H_{gc}$ then $T_j$ is a committed transaction with $j$ being the largest timestamp smaller than $k$. Formally, $T_j = \lts{T_k}{x}{H_{gc}}$. 
\end{lemma}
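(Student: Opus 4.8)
The plan is to reduce the claim to the already-established read rule for the garbage-collection-free protocol (\propref{readsfrom}), and to argue that garbage collection never removes the one version tuple on which that read rule depends. The single new ingredient I would invoke beyond \propref{readsfrom} is \lemref{vlreal}, whose part (a) guarantees precisely that the \ltsn{} tuple of any live transaction survives the \gc{} routine; the rest of the reasoning is inherited from \subsecref{proof}.

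Concretely, I would fix the moment in $H_{gc}$ at which $r_k(x)$ invokes \find$(k,x)$ and let $H_p$ be the prefix of $H_{gc}$ up to that point. Since $T_k$ has begun but has neither committed nor aborted, $T_k \in \id{incomplete}(H_p)$, so \lemref{vlreal}(a) applies with $T_l = T_k$: for $T_j = \lts{T_k}{x}{H_p}$ we get $\vli{T_j}{x}{H_p} = vt_j \neq nil$, i.e. $T_j$'s tuple has not been deleted. Because $r_k(x)$ holds the lock on $x$ (Algorithm read, \lineref{rlock}), \find{} scans a stable snapshot of $x.vl$ and returns the present tuple with the largest timestamp smaller than $k$. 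As $vt_j$ is present and $T_j$ already carries the largest such timestamp among all committed writers of $x$, \find{} returns $vt_j$, and hence $T_k$ reads $x$ from $T_j$.

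It then remains to reconcile the local witness $T_j = \lts{T_k}{x}{H_p}$, computed on the running prefix, with the global $\lts{T_k}{x}{H_{gc}}$ demanded by the statement. Here I would reuse the commit-rule argument underlying \propref{readsfrom}: once $r_k(x)$ appends $k$ to $T_j$'s read list $rl$, any update transaction $T_m$ with $j < m < k$ that later attempts to commit a write to $x$ fails the \checkv{} test (the forbidden triple $j < m < k$ with $k \in rl$) and so aborts. Consequently no committed transaction with timestamp strictly between $j$ and $k$ ever writes to $x$ in the whole of $H_{gc}$, so $T_j$ remains the largest-timestamp committed writer of $x$ below $k$, giving $T_j = \lts{T_k}{x}{H_{gc}}$.

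The main obstacle is the first step, namely certifying that the required version is still physically present in $x.vl$ when $r_k(x)$ executes; everything else is bookkeeping carried over from the collection-free setting. That step is exactly the payload of \lemref{vlreal}, so the crux of the proof is the correct instantiation of that induction lemma at the live transaction $T_k$, together with the minor but necessary care in distinguishing $\lts{}{}{H_p}$ from $\lts{}{}{H_{gc}}$, which the commit-rule argument resolves. With \lemref{gc-read-rule} in hand, I would expect \lemref{readswrite} to follow for $H_{gc}$ by the same reasoning used in \subsecref{proof}, since its proof only consults the read rule.
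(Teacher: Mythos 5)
Your proof is correct and follows essentially the same route as the paper's: fix a prefix of $H_{gc}$ ending at the read operation of $T_k$, observe that $T_k$ is live there, and invoke Lemma~\ref{lem:vlreal}(a) to conclude that the version tuple of $T_j = \lts{T_k}{x}{H_p}$ has not been garbage-collected, so $\find(k,x)$ returns it. You are in fact somewhat more careful than the published proof, which stops once it concludes that $T_k$ reads $x$ from $T_j$ on the prefix; your closing paragraph reconciling $\lts{T_k}{x}{H_p}$ with $\lts{T_k}{x}{H_{gc}}$ via the \checkv{}/commit-rule argument (no $T_m$ with $j<m<k$ can later commit a write to $x$ once $k$ sits in $T_j$'s read list) supplies a step the paper leaves implicit.
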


\begin{proof}
Consider a history $H'$ (a prefix of $H_{gc}$) in which the read \op{} of $T_k$ is the last \op{} to execute in $H'$. Thus $T_k$ is a live transaction in $H'$. Let $T_j = \lts{T_k}{x}{H'}$. From \lemref{vlreal}.(a), we get that $T_k$'s \ltsn{} transaction is correctly maintained by the algorithm (with garbage collection). Hence, $\vli{T_j}{x}{H'} = vt_j \neq nil$ (which implies that the tuple $vt_j$ has not yet been deleted). Hence, $T_k$ reads $x$ from $T_j$.

\cmnt {
Now suppose $T_j \neq T_i = \lts{T_k}{x}{H_{gc}}$. This is possible only if $j<i<k$. 

We assume that $T_i$ is the first transaction to commit so that it comes between $j$ and $k$. Consider the prefix of $H_{gc}$, $H''$ which has all the events until $T_i$'s commit event but not the commit event. Hence, $T_i$ is a live transaction in $H''$. From our choice of $T_i$, we have that $T_j = \lts{T_i}{x}{H''}$ and $\vli{T_j}{x}{H''} = vt_j \neq nil$. Thus $vt_j$ is not yet deleted in $H''$. Since $T_k$ has already read the value of $x$ from $T_j$, $k$ must be stored in the read list of $vt_j$. But when $T_i$ executes its \tryc{} \op, the algorithm will detect the condition $j<i<k$ from $vt_j$. So, $T_i$ will be aborted. Thus, $T_j = \lts{T_k}{x}{H_{gc}}$. 
}
\end{proof}

\cmnt{
\begin{proof}
Suppose, when a new read operation $r_i(x,v)$ occurs, it will read from $w_j(x,v)$ where $j<i$ but j is not the largest $closest\_id$. This means it selecting wrong $(j,v)$ from $<j,v,r_l>$ in the line no 2 of $Select(i,x)$ function. If this is true then some id $k$ of the transaction $T_k$ is the $closest\_id$ less than $i$ and greater than $j$, has been deleted. That is $j<k<i$.

Let $k'$ be the next version of $k$. Thus from the above two assumptions, we conclude that $j<k<i<k'$. So if we want to delete the id $k$, $i$ has to terminate before the deletion. But $T_i$ is reading from $T_j$, it means $i$ is in the read list of $j$ not in the $r_l$ of $k$. Thus $k$ can not be deleted which is contradictory. So $T_i$ has to read from $T_j$. Therefore $j$ is the closest tuple of $i$. That means $j \leqslant i$. Hence the history $H$ generated by MVTO with garbage collection satisfy the read rule.
\end{proof}
}

\begin{lemma}
\label{lem:write_rule}
The history $H_{gc}$ generated by MVTO with garbage collection satisfies write rule: Suppose a transaction $T_k$ reads $x$ from (a committed transaction) $T_j$ in $H_{gc}$, i.e. $\{w_j(x, v), r_k(x, v)\} \in \evts{H_{gc}}$. Let $T_i$ be a committed transaction that writes to $x$, i.e. $w_i(x, u) \in \evts{T_i}$. Then, the timestamp of $T_i$ is either less than $T_j$'s timestamp or greater than $T_k$'s timestamp, i.e. $i<j \oplus k<i$. 
\end{lemma}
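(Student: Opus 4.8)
The plan is to argue by contradiction, exactly as in the non-garbage-collected counterpart \lemref{readswrite}, but replacing the bare appeal to ``the implementation of read and \tryc{}'' with the stronger invariants of \lemref{vlreal}. Since $T_k$ reads $x$ from $T_j$, the read rule (\lemref{gc-read-rule}) already gives $j<k$, so the only way the conclusion $i<j \oplus k<i$ can fail is $j<i<k$; I assume this and aim to show that the commit check $\checkv(i,x)$ run inside $\tryc_i$ must have fired, so $T_i$ could not have committed. The sole extra difficulty over \lemref{readswrite} is that with garbage collection the version tuple created by $T_j$, the one carrying $k$ in its read list, might in principle have been removed by \gc{} before the check executes; ruling this out is what \lemref{vlreal} is for.

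First I would split on the order of $c_i$ and $r_k(x,v)$ in $H_{gc}$. If $c_i$ precedes $r_k$, then $T_i$ is a committed transaction that writes $x$ with timestamp strictly between $j$ and $k$ and is committed in the prefix ending at $r_k$; this contradicts the read rule (\lemref{gc-read-rule}), since then $T_j$ would not be the largest-timestamp committed writer of $x$ below $k$, and $T_k$ could not have read from $T_j$.

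The substantive case is $c_i >_{H_{gc}} r_k$. Here I would not reason about $T_i$ directly, because at the instant $\checkv(i,x)$ runs the tuple $vt_j$ may already be garbage (a different late committer could have advanced the $nts$-chain past $j$, so the deletion range of $vt_j$ need not contain $i$). Instead I take minimality: among all committed transactions that write $x$ with timestamp in $(j,k)$ and commit after $r_k$, let $T_m$ be the one of smallest timestamp; this set is nonempty since it contains $T_i$. Using the total order of lock acquisitions on $x$ given by $\locko_H$ together with $c_m >_{H_{gc}} r_k$, the read $r_k$ must lock $x$ (appending $k$ to $T_j$'s read list and releasing $x$) \emph{before} $\tryc_m$ locks $x$; otherwise $T_m$ would hold $x$ through its commit and force $c_m <_{H_{gc}} r_k$. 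Hence $c_j <_{H_{gc}} r_k <_{H_{gc}} \checkv(m,x)$, and $k$ is already in $T_j$'s read list when $\checkv(m,x)$ runs. Moreover no committed writer of $x$ has timestamp in $(j,m)$ at the prefix $H_q$ ending at $\checkv(m,x)$: one committed before $r_k$ would again violate the read rule, and one committed after $r_k$ would be a smaller-timestamp member of the chosen set, contradicting minimality of $m$. Thus $T_j=\lts{T_m}{x}{H_q}$, so \lemref{vlreal}(a) guarantees $\vli{T_j}{x}{H_q}=vt_j\neq nil$, i.e. $vt_j$ is still present. Then $\checkv(m,x)$ inspects $vt_j=\langle j,v,rl\rangle$, finds $T_k\in rl$, and tests $j<m<k$, which holds, so it returns \textit{false} and $T_m$ is forced to abort, contradicting that $T_m$ committed. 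Therefore no committed writer of $x$ has timestamp in $(j,k)$, so $j<i<k$ is impossible, which together with $j<k$ yields $i<j \oplus k<i$.

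The main obstacle is precisely this garbage-collection subtlety in the substantive case: guaranteeing that the tuple $vt_j$ survives until the conflicting commit check. The minimality choice of $T_m$ is exactly what pins $T_j$ down as $T_m$'s \ltsn{} transaction, so that \lemref{vlreal} applies and $vt_j$ cannot have been collected; invoking the check on $T_i$ itself could fail, because for $T_i$ the tuple $vt_j$ may legitimately have already been deleted.
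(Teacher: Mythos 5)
Your proof is correct, and it rests on the same machinery as the paper's: rule out conflicting writers that commit before $r_k$ via \lemref{gc-read-rule}, and use \lemref{vlreal} to guarantee that the tuple $vt_j$, carrying $k$ in its read list, is still present when a conflicting writer executes $\checkv$, which then forces that writer to abort. The structural difference is where this argument is applied. The paper applies it directly to $T_i$: it fixes the prefix $H'$ ending before $\tryc_i$, claims $T_j = \lts{T_k}{x}{H'}$ (hence $T_j = \lts{T_i}{x}{H'}$), invokes \lemref{vlreal} so that $vt_j$ survives to the end of $H'$, and concludes that $\checkv(i,x)$ detects $j<i<k$. The weak point is the claim $T_j = \lts{T_k}{x}{H'}$: it silently excludes writers with timestamps in $(j,k)$ that commit between $r_k$ and $\tryc_i$, which is exactly the interference you flag --- such a committer would change $T_i$'s \ltsn{} transaction and could let \gc{} legitimately delete $vt_j$ before $\checkv(i,x)$ runs. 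Since the proof given for \lemref{gc-read-rule} only establishes the \ltsn{} relation at the prefix ending at $r_k$ (its full-history statement is essentially the write rule itself, so appealing to it in that strength here would be circular), this step is a genuine gap in the paper's argument. Your case split on the order of $c_i$ and $r_k$, together with the minimal-timestamp choice of $T_m$, closes precisely this gap: minimality is what pins $T_j$ as $T_m$'s \ltsn{} transaction at the prefix ending at $\checkv(m,x)$, so \lemref{vlreal} applies at the moment it is actually needed, and the conflict $j<m<k$ is caught. The paper's version buys brevity; yours buys an argument that genuinely withstands the garbage-collection interference the lemma is meant to address.
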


\begin{proof}
Consider a sequential prefix of a $H_{gc}$, say $H'$, in which transaction $T_i$ has not yet executed \tryc{} \op{} but $T_k$ has read from $T_j$. Thus $T_i$ is a live a transaction in $H'$. From \lemref{gc-read-rule}, we have that $T_j = \lts{T_k}{x}{H'}$. Suppose by contradiction, $j<i<k$. This implies that $T_j = \lts{T_i}{x}{H'}$ as well. Thus, $\vli{T_j}{x}{H''} = vt_j \neq nil$ implying that $vt_j$ is not yet deleted in $H'$. The read function stores $T_k$ in $vt_j$. When $T_i$ executes \tryc{} \op, the algorithm detects $j<i<k$ and aborts $T_i$ which contradicts our assumption. 

Thus, we have that $i<j \oplus k<i$. 
\end{proof}

Since the read-rule and the write-rules are maintained, we get that \lemref{edgeorder} is true as well. Hence, \thmref{to-opaque} automatically follows. Thus the history generated by the algorithm with garbage collection is opaque as well.

\section{Conclusion}
\label{sec:conc}

There are many applications that require long running read-only transactions. Many STM systems can cause such transactions to abort. Multi-version STM system ensure that a read-only transactions does not need to abort by maintaining multiple versions. Two important properties that should be considered while building a STM system are: correctness which normally is opacity and progress condition which for multiversion systems is mv-permissiveness. Although several multi-version STM systems have been proposed to the best of our knowledge none of them have been proved formally satisfy opacity. 

In this paper we presented a timestamp based multiversion STM system that satisfies opacity and mv-permissiveness. We also presented an algorithm for garbage collection that deletes version that will never be used. We have formally proved the correctness of our algorithm including garbage collection.  

As a part of future work, we would like to implement this algorithm and test its performance on various benchmarks. Recently, Attiya and Hillel \cite{attiyaHill:sinmvperm:tcs:2012} proposed a single-version STM system that is mv-permissive. Their system uses Compare and Swap (CAS) primitives in addition to lock. As a part of the implementation, we would like to compare the performance of our algorithm with theirs to see how much benefit do multiple versions offer.

\bibliography{citations}

\begin{thebibliography}{10}

\bibitem{attiyaHill:sinmvperm:tcs:2012}
Hagit Attiya and Eshcar Hillel.
\newblock A single-version stm that is multi-versioned permissive.
\newblock {\em Theory Comput. Syst.}, 51(4):425--446, 2012.

\bibitem{AydAbd:2008:Serial:transact}
Utku Aydonat and Tarek Abdelrahman.
\newblock Serializability of transactions in software transactional memory.
\newblock In {\em TRANSACT~'08: 3rd Workshop on Transactional Computing}, feb
  2008.

\bibitem{BernGood:1983:MCC:TDS}
Philip~A. Bernstein and Nathan Goodman.
\newblock Multiversion concurrency control: Theory and algorithms.
\newblock {\em ACM Trans. Database Syst.}, 8(4):465--483, December 1983.

\bibitem{CachRito:2006:CP}
J.~Cachopo and A.~Rito-Silva.
\newblock {Versioned boxes as the basis for memory transactions}.
\newblock {\em Science of Computer Programming}, 63(2):172--185, 2006.

\bibitem{dice:2006:tl2:disc}
Dave Dice, Ori Shalev, and Nir Shavit.
\newblock Transactional locking {II}, booktitle = {DISC~'06: Proc. 20th
  International Symposium on Distributed Computing}, note = {Springer-Verlag
  Lecture Notes in Computer Science volume 4167}, pages = {194--208}, year =
  {2006}, month = {sep},.

\bibitem{Guer+:disc:2008}
Rachid Guerraoui, Thomas Henzinger, and Vasu Singh.
\newblock Permissiveness in transactional memories.
\newblock In {\em DISC~'08: Proc. 22nd International Symposium on Distributed
  Computing}, pages 305--319, sep 2008.
\newblock Springer-Verlag Lecture Notes in Computer Science volume 5218.

\bibitem{GuerKap:2008:PPoPP}
Rachid Guerraoui and Michal Kapalka.
\newblock On the correctness of transactional memory.
\newblock In {\em PPoPP '08: Proceedings of the 13th ACM SIGPLAN Symposium on
  Principles and practice of parallel programming}, pages 175--184, New York,
  NY, USA, 2008. ACM.

\bibitem{tm-book}
Rachid Guerraoui and Michal Kapalka.
\newblock {\em Principles of Transactional Memory,Synthesis Lectures on
  Distributed Computing Theory}.
\newblock Morgan and Claypool, 2010.

\bibitem{HerlMoss:1993:SigArch}
Maurice Herlihy and J.~Eliot B.Moss.
\newblock Transactional memory: architectural support for lock-free data
  structures.
\newblock {\em SIGARCH Comput. Archit. News}, 21(2):289--300, 1993.

\bibitem{herlihy+:2003:stm-dynamic:podc}
Maurice Herlihy, Victor Luchangco, Mark Moir, and III {William N. Scherer}.
\newblock Software transactional memory for dynamic-sized data structures.
\newblock In {\em PODC~'03: Proc. 22nd ACM Symposium on Principles of
  Distributed Computing}, pages 92--101, Jul 2003.

\bibitem{KuzSat:Corr:2012}
Petr Kuznetsov and Sathya Peri.
\newblock On non-interference of transactions.
\newblock {\em CoRR}, abs/1211.6315, 2012.

\bibitem{KR:2011:OPODIS}
Petr Kuznetsov and Srivatsan Ravi.
\newblock On the cost of concurrency in transactional memory.
\newblock In {\em OPODIS}, pages 112--127, 2011.

\bibitem{Papad:1979:JACM}
Christos~H. Papadimitriou.
\newblock The serializability of concurrent database updates.
\newblock {\em J. ACM}, 26(4):631--653, 1979.

\bibitem{Perel+:2011:SMV:DISC}
Dmitri Perelman, Anton Byshevsky, Oleg Litmanovich, and Idit Keidar.
\newblock Smv: Selective multi-versioning stm.
\newblock In {\em DISC}, pages 125--140, 2011.

\bibitem{perel+:2010:MultVer:PODC}
Dmitri Perelman, Rui Fan, and Idit Keidar.
\newblock On maintaining multiple versions in stm.
\newblock In {\em PODC}, pages 16--25, 2010.

\bibitem{riegel+:2006:LSA:DISC}
Torvald Riegel, Pascal Felber, and Christof Fetzer.
\newblock A lazy snapshot algorithm with eager validation.
\newblock In {\em Proceedings of the 20th International Symposium on
  Distributed Computing, {DISC} 2006}, volume 4167 of {\em Lecture Notes in
  Computer Science}, pages 284--298. Springer, Sep 2006.

\bibitem{riegel+:sistm:transact:2006}
Torvald Riegel, Christof Fetzer, and Pascal Felber.
\newblock Snapshot isolation for software transactional memory.
\newblock In {\em Proceedings of the First ACM SIGPLAN Workshop on Languages,
  Compilers, and Hardware Support for Transactional Computing}. Jun 2006.

\bibitem{ShavTou:1995:PODC}
Nir Shavit and Dan Touitou.
\newblock Software transactional memory.
\newblock In {\em PODC '95: Proceedings of the fourteenth annual ACM symposium
  on Principles of distributed computing}, pages 204--213, New York, NY, USA,
  1995. ACM.

\end{thebibliography}

\end{document}